\tikzset{snake it/.style={decorate, decoration=snake}}
\newcommand{\ti}{t_{\mathfrak{0}}}
\newcommand{\tf}{t_{\mathit{f}}}
\title{On the unraveling of open quantum dynamics.}
\author{Brecht I. C Donvil \\{\footnotesize\it Institute for Complex Quantum Systems and IQST, Ulm University - Albert-Einstein-Allee 11, D-89069 Ulm, Germany \& brecht.donvil@uni-ulm.de}\\[2ex]
        Paolo Muratore-Ginanneschi 
                     \\{\footnotesize\it University of Helsinki, Department of Mathematics and Statistics
                     	P.O. Box 68 FIN-00014, Helsinki, Finland \& paolo.muratore-ginanneschi@helsinki.fi} }
\begin{document}

\maketitle
\begin{abstract}
 It is well known that the state operator of an open quantum system can be generically represented as the solution of a time-local equation -- a quantum master equation.
 Unraveling in quantum trajectories offers a picture of open system dynamics dual to solving master equations. In the unraveling picture, physical indicators are computed as Monte-Carlo averages over a stochastic process valued in the Hilbert space of the system. This approach is particularly adapted to simulate systems in large Hilbert spaces. 
 We show that the dynamics of an open quantum system generically admits an unraveling in the Hilbert space of the system described by a Markov process generated by ordinary stochastic differential equations for which rigorous concentration estimates are available. The unraveling can be equivalently formulated in terms of norm-preserving state vectors or in terms of linear \textquotedblleft ostensible\textquotedblright\ processes trace preserving only on average. We illustrate the results in the case of a two level system in a simple boson environment. 
 Next, we derive the state-of-the-art form of the Di\'osi-Gisin-Strunz Gaussian random ostensible state equation in the context of a model problem. This equation provides an exact unraveling of open systems in Gaussian environments. We compare and contrast the two unravelings and their potential for applications to quantum error mitigation.  
\end{abstract}

\begin{minipage}{0.5\textwidth}
		\begin{minipage}{0.8\textwidth}
	\epigraph{ ``I have an equation; do
		you have one too?''}{\textit{Dirac's first question to Feynman \\ according to \cite{ZeeA2010} pag. 105}}
\end{minipage}
\end{minipage}
\begin{minipage}{0.5\textwidth}
	\begin{minipage}{0.8\textwidth}
		\begin{center}
\textit{To the memory of \\  G\"oran Lindblad \\who has an equation.}
\end{center}
\end{minipage}
\end{minipage}

\section{Introduction}

A classical result in the theory of open quantum systems states that the memory kernel equation obtained by means of the Nakajima-Zwanzig projection operator method \cite{NakS1958,ZwaR1960} is always amenable to the form of a master equation on a finite time interval \cite{FuKr1968,ShiT1970,VstV1973,GrTaHa1977,vaWoLe1995,vaWoLe2000a,AnCrHa2007,ChKo2010}. In finite dimensional Hilbert spaces, the master equation also admits an unique canonical form \cite{HaCrLiAn2014} which makes the analysis of the solution semi-group particularly transparent.  The necessary and sufficient conditions for a completely positive semi-group -- whose derivation is the eternal glory of  Lindblad \cite{LinG1976} and Gorini, Kossakowski, and Sudarshan \cite{GoKoSu1976}--  translate into the positivity of the scalar quantities that couple the decoherence operators to the canonical form of the master equation. For this reason the couplings are often referred to as canonical ``rates''. The derivation of the general master equation, however, does not impose any sign constraint on the canonical rates. It is an open problem to determine 
the conditions guaranteeing that particular solutions enjoy complete positivity \cite{HaCrLiAn2014,ChrD2022}. 

One avenue to analyze and numerically solve master equations is quantum trajectory theory formulated in terms of the stochastic Schr\"odinger equation \cite{BaBe1991,CarH1993} or of quantum state diffusion \cite{GisN1984,GaPaZo1992}. The avenue becomes especially useful as the dimension of the system Hilbert space increases \cite{DaCaMo1992}.  The gist of the method is to compute the solution of the master equation --unravel the master equation \cite{CarH1993}-- as a Monte-Carlo average over solutions of ordinary stochastic differential equations in the Hilbert space of the system. So far, quantum trajectories enjoying these properties appeared only  as a ``Lagrangian picture'' in hydrodynamic sense (see e.g. \S~11.2 \cite{CeCeVu2009}) of the Lindblad-Gorini-Kossakowski-Sudarshan completely positive master equation. In this contribution, we explain how the same stochastic differential equation as in the completely positive case yields the unraveling of the most general canonical form of the master equation at the same computational cost \cite{DoMG2022,DoMG2023}. The main point is to respect at the level of Monte-Carlo average the canonical algebraic form of the solution semi-group. Namely, in the general case the semi-group is only completely bounded \cite{PauV2003} and as such is amenable to the difference of two non-trace preserving completely positive maps at any instant of time \cite{WitG1981,PauV1982}. This task is accomplished by the introduction of a martingale process which we call the \textquotedblleft influence martingale\textquotedblright  \cite{DoMG2022}. The choice of the name emphasizes that the result holds irrespective of the environment, including of course a Gaussian one \cite{FeVe1963}.
Gaussian, bosonic or fermionic, environments linearly coupled to the system are, among other reasons, important because it is in that case possible to derive an exact pair of random equations governing the evolution of vectors unraveling the system state operator. The vectors obey a linear evolution equation and {preserve the norm squared only on average}. For this reason, following \cite{WisH1996} we refer to {them} as \textquotedblleft ostensible\textquotedblright\ state vectors. The result is a refinement \cite{StGr2002,StoJ2004,TilA2017} see also \cite{LiSt2017} of an approach pioneered by Strunz \cite{StrW1996}, in collaboration with Di\'osi and Gisin \cite{DiSt1997,DiGiSt1998}. {In the second part of the paper we derive the state-of-the-art  form of the Di\'osi-Gisin-Strunz random ostensible state vector equation in a model problem, yet capturing all the features of the general case}. 
{By presenting} in parallel the derivation of these two unravelings, we hope to draw the interest of the reader to the theory of unraveling in quantum trajectories and to its significance for parameter estimation \cite{WeMuKiScRoSi2016} and error mitigation \cite{TeBrGa2016}.   This paper is, however, not meant to be a comprehensive review of the theory.

The structure of the paper is as follows. In section~\ref{sec:me} we recall the properties of the canonical master equation. In section~\ref{sec:LGKS} we recall how to unravel the canonical master equation under the hypotheses guaranteeing the complete positivity of the semi-group. These two sections report well known material. The reason to briefly reproduce it here is to highlight the logic leading to the most general form of the unraveling. Readers familiar with the topics can directly proceed to section~\ref{sec:cb} where we introduce the influence martingale. In section~\ref{sec:ostensible} we discuss the generation of ostensible statistics, a numerical application of quantum trajectory theory particularly suited for quantum parameter estimation \cite{GaWi2001}.
In section~\ref{sec:tls} drawing from \cite{SmVa2010} we discuss an application to the spin boson model. For simplicity of discussion we restrict the attention to the case when explicit formulas are exact. The reference to \cite{SmVa2010} makes transparent how the general case can be handled using time-convolutionless perturbation theory \cite{NeBrWe2021}. Section~\ref{sec:DGS} is devoted to {a} model problem illustration of the Di\'osi-Gisin-Strunz formalism \cite{DiGiSt1998}. In section~\ref{sec:comparison} we contrast the two unravelings.  
The last section is devoted to conclusions and outlook.

\section{The canonical master equation of open quantum dynamic}
\label{sec:me}

Under generic conditions
 \cite{FuKr1968,ShiT1970,VstV1973,GrTaHa1977,vaWoLe1995,vaWoLe2000a,AnCrHa2007,ChKo2010}, the state operator of an open system on a $ d $-dimensional Hilbert space $ \mathcal{H} $ obeys a master equation amenable to the following unique canonical form
	\begin{subequations}
	\label{me:me}
	\begin{align}
		&\label{me:me1}
		\partial_{t}\bm{\rho}_{t}=-\imath\,\left[\operatorname{H}_{t}\,,\bm{\rho}_{t}\right]
		+\sum_{\mathscr{l}=1}^{d^{2}-1}\mathscr{w}_{\mathscr{l}; t}\operatorname{D}_{\operatorname{L}_{\mathscr{l};t}}(\bm{\rho}_{t})
		\\
		&
		\operatorname{D}_{\operatorname{L}_{\mathscr{l}; t}}(\bm{\rho}_{t})=
		\frac{1}{2}
		\left(
		\left[\operatorname{L}_{\mathscr{l}; t}\,,\bm{\rho}_{t}\operatorname{L}_{\mathscr{l}; t}^{\dagger}\right]
		+
		\left[\operatorname{L}_{\mathscr{l}; t}\bm{\rho}_{t}\,,\operatorname{L}_{\mathscr{l}; t}^{\dagger}\right]
		\right)
		\label{me:me2}
	\end{align}
\end{subequations}
The properties of the canonical form are thoroughly investigated in \cite{HaCrLiAn2014}. Here we only recall few facts that are essential for  its  unraveling (\ref{me:me}).
	\begin{enumerate}[nosep, leftmargin=10pt, label=\roman*]
	\item \label{me:i}The canonical decoherence couplings $ \left\{  \mathscr{w}_{\mathscr{l}; t}\right\}_{\mathscr{l}=1}^{d^{2}-1} $ are in general real, scalar functions of the time variable $ t $. Physically, {one may conceptualize $\mathscr{w}_{\mathscr{l}; t}$ as the time derivative} of the probability that the open systems interacts with its surrounding environment via the $ \mathscr{l} $-th decoherence operator $ \operatorname{L}_{\mathscr{l}; t} $. As (\ref{me:me}) holds without any reference to the van-Hove scaling limit, there is no reason to expect the canonical couplings to be sign definite\footnote{In \cite{HaCrLiAn2014,DoMG2023} they are referred as  \textquotedblleft canonical rates\textquotedblright\ and in \cite{DoMG2022} as \textquotedblleft weights\textquotedblright. The terminology  is in both cases somewhat misleading because it is commonly applied to quantities positive by definition}.
	\item \label{me:ii} At any instant of time $ t $, the collection $ \left\{  \operatorname{L}_{\mathscr{l}; t}\right\}_{\mathscr{l}=1}^{d^{2}-1} $  of decoherence operators together  with a properly normalized identity matrix on $ \mathcal{H} $
	\begin{align}
		\operatorname{L}_{0}=\frac{\operatorname{1}_{d}}{\sqrt{d}}
		\nonumber
	\end{align} 
	forms  a complete basis of the space $ \mathcal{M}_{d} $ of complex matrices on $ \mathcal{H} $ with respect to the Hilbert-Schmidt inner product
	\begin{align}
		\operatorname{Tr}\left(\operatorname{L}_{\mathscr{l}; t}^{\dagger} \operatorname{L}_{\mathscr{k}; t}\right) =\delta_{\mathscr{l},\mathscr{k}}\hspace{1.0cm}\mathscr{l},\mathscr{k}=0,\dots, d^{2}-1
				\label{me:ortho}
	\end{align}
	\item \label{me:iii} $ \operatorname{H}_{t} $ is a self-adjoint element of $ \mathcal{M}_{d} $, eventually also time dependent.
\end{enumerate}
The orthonormality relations (\ref{me:ortho}) are always emphasized in the literature while discussing the derivation of the Lindblad-Gorini-Kossakowski-Sudarshan
master equation. Less well known is the following consequence of the fact that $\left\{  \operatorname{L}_{\mathscr{l}; t}\right\}_{\mathscr{l}=1}^{d^{2}-1} $ spans the sub-space of trace-less matrices:
\begin{theorem}{Proposition}
	\begin{subequations}
				\label{me:povm}
			\begin{align}
&		\sum_{\mathscr{l}=1}^{\mathscr{L}}\operatorname{L}_{\mathscr{l}; t}^{\dagger}\operatorname{L}_{\mathscr{l}; t}=\mathscr{g}\operatorname{1}_{d}
		\label{cb:povm1}
		\\
&		\mathscr{g}=\frac{d^{2}-1}{d}
\label{me:povm2}
	\end{align}
\end{subequations}	
\end{theorem}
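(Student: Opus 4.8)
The plan is to read off the identity from the single structural fact highlighted above: the collection $\{\operatorname{L}_{\mathscr{l}; t}\}_{\mathscr{l}=0}^{d^{2}-1}$, \emph{including} the rescaled identity $\operatorname{L}_{0}$, is an orthonormal basis of $\mathcal{M}_{d}$ for the Hilbert--Schmidt product (\ref{me:ortho}). Since $t$ plays no role I suppress it. The first step is to record the resolution of the identity carried by this basis: identifying each $\operatorname{L}\in\mathcal{M}_{d}$ with the vector $(\operatorname{L})_{i a}\in\mathbb{C}^{d^{2}}$ of its entries, relation (\ref{me:ortho}) states precisely that these $d^{2}$ vectors are orthonormal in $\mathbb{C}^{d^{2}}$, hence complete, so that $\sum_{\mathscr{l}=0}^{d^{2}-1}(\operatorname{L}_{\mathscr{l}})_{i a}\,\overline{(\operatorname{L}_{\mathscr{l}})_{j b}}=\delta_{i j}\,\delta_{a b}$ for all indices $i,j,a,b$.

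The second step is a one-line contraction: setting $j=i$ and summing over $i$ turns the left-hand side into $\sum_{\mathscr{l}=0}^{d^{2}-1}\sum_{i}(\operatorname{L}_{\mathscr{l}}^{\dagger})_{b i}(\operatorname{L}_{\mathscr{l}})_{i a}$ and the right-hand side into $d\,\delta_{a b}$, i.e.\ $\sum_{\mathscr{l}=0}^{d^{2}-1}\operatorname{L}_{\mathscr{l}}^{\dagger}\operatorname{L}_{\mathscr{l}}=d\,\operatorname{1}_{d}$. The third step is to isolate the $\mathscr{l}=0$ summand, $\operatorname{L}_{0}^{\dagger}\operatorname{L}_{0}=\operatorname{1}_{d}/d$, which leaves $\sum_{\mathscr{l}=1}^{d^{2}-1}\operatorname{L}_{\mathscr{l}}^{\dagger}\operatorname{L}_{\mathscr{l}}=\bigl(d-\tfrac{1}{d}\bigr)\operatorname{1}_{d}=\tfrac{d^{2}-1}{d}\,\operatorname{1}_{d}$, which is exactly (\ref{me:povm}) with $\mathscr{L}=d^{2}-1$ and $\mathscr{g}$ as in (\ref{me:povm2}).

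I do not expect a genuine obstacle here; the only subtlety worth flagging is that one cannot run the argument inside the traceless subspace alone, since $\mathcal{M}_{d}$ has no resolution of its identity supported there — the number $\mathscr{g}$ arises precisely as the defect $d-1/d$ between the full-basis value $d\,\operatorname{1}_{d}$ and the contribution of $\operatorname{L}_{0}$. As an independent check I would note that $\sum_{\mathscr{l}}\operatorname{L}_{\mathscr{l}}^{\dagger}\operatorname{L}_{\mathscr{l}}$ is invariant under $\operatorname{L}_{\mathscr{l}}\mapsto\sum_{\mathscr{k}}U_{\mathscr{l}\mathscr{k}}\operatorname{L}_{\mathscr{k}}$ with $U$ unitary, hence is the same for every orthonormal basis of the traceless subspace; evaluating it on the normalized generalized Gell-Mann matrices reproduces $\tfrac{d^{2}-1}{d}\operatorname{1}_{d}$ through the standard value of the $\mathfrak{su}(d)$ quadratic Casimir in the defining representation. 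Either route also makes manifest that $\mathscr{g}$ is a pure number, independent of $t$, consistently with the absence of a time label in (\ref{me:povm2}).
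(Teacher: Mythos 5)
Your proof is correct, and it takes a genuinely different route from the paper's. You exploit the fact that the full collection $\{\operatorname{L}_{\mathscr{l};t}\}_{\mathscr{l}=0}^{d^{2}-1}$, with $\operatorname{L}_{0}=\operatorname{1}_{d}/\sqrt{d}$ included, is an orthonormal basis of $\mathcal{M}_{d}$ for the Hilbert--Schmidt product, write the corresponding resolution of the identity in components, contract one pair of indices to get $\sum_{\mathscr{l}=0}^{d^{2}-1}\operatorname{L}_{\mathscr{l};t}^{\dagger}\operatorname{L}_{\mathscr{l};t}=d\,\operatorname{1}_{d}$, and then subtract the $\mathscr{l}=0$ contribution $\operatorname{1}_{d}/d$. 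The paper instead works only inside the traceless subspace: it relates the $\operatorname{L}_{\mathscr{l};t}$ to a fixed set of $\mathrm{SU}(d)$ generators $\operatorname{T}_{\mathscr{k}}$ by a unitary matrix $U_{\mathscr{k},\mathscr{l};t}=\operatorname{Tr}(\operatorname{T}_{\mathscr{k}}\operatorname{L}_{\mathscr{l};t})$, uses unitarity to reduce the sum to $\sum_{\mathscr{i}}\operatorname{T}_{\mathscr{i}}^{2}$, and then invokes Schur's lemma (equivalently the identity $\operatorname{T}_{\mathscr{i}}^{2}=\operatorname{1}_{d}/d+\sum_{\mathscr{j}}\mathscr{d}_{\mathscr{i},\mathscr{i},\mathscr{j}}\operatorname{T}_{\mathscr{j}}$ with $\sum_{\mathscr{i}}\mathscr{d}_{\mathscr{i},\mathscr{i},\mathscr{j}}=0$), i.e.\ essentially the quadratic-Casimir evaluation that you relegate to your ``independent check''. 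Your completeness-relation argument is the more elementary and self-contained of the two: it needs no choice of reference basis, no structure constants, and no representation theory, and it makes transparent where the number $\mathscr{g}=d-1/d$ comes from (the full-basis value $d$ minus the $\operatorname{L}_{0}$ defect). What the paper's route buys is the group-theoretic reading of $\mathscr{g}$ as a Casimir-type invariant and the explicit unitary-covariance statement, which you recover anyway as a remark. One small point of care in your write-up: the deduction that the $d^{2}$ orthonormal matrices are complete uses that $\dim\mathcal{M}_{d}=d^{2}$, which is exactly the paper's standing assumption in item \ref{me:ii}, so nothing is missing.
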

\begin{proof}~\newline
To prove the claim, we notice that each of the $ \operatorname{L}_{\mathscr{l}; t} $'s must be connected to a complete set of generators 
$ \operatorname{T}_{\mathscr{l}} $ of the $ \mathrm{SU}(d) $ Lie algebra by a unitary transformation
\begin{align}
&	\operatorname{L}_{\mathscr{l}; t}=\sum_{\mathscr{k}=1}^{d^{2}-1}\operatorname{T}_{\mathscr{k}} U_{\mathscr{k},\mathscr{l}; t}
\nonumber\\
&U_{\mathscr{k},\mathscr{l}; t}:=\operatorname{Tr}(\operatorname{T}_{\mathscr{k}}\operatorname{L}_{\mathscr{l}; t})
	\nonumber
\end{align}
The identity holds because each of the two collections of matrices forms a complete orthonormal basis of the sub-space of trace-less elements of $ \mathcal{M}_{d} $. Unitarity implies that
\begin{align}
	\sum_{\mathscr{l}=1}^{d^{2}-1}\operatorname{L}_{\mathscr{l}; t}^{\dagger}\operatorname{L}_{\mathscr{l}; t}={\sum_{\mathscr{l}=1}^{d^{2}-1}}
	\sum_{\mathscr{i},\mathscr{j}=1}^{d^{2}-1}\operatorname{T}_{\mathscr{i}} \operatorname{T}_{\mathscr{j}}\bar{U}_{\mathscr{i},\mathscr{l}; t}U_{\mathscr{j},\mathscr{l}; t}=
	\sum_{\mathscr{i}=1}^{d^{2}-1}\operatorname{T}_{\mathscr{i}}^{2} 
	\nonumber
\end{align}
Finally,  we can invoke Schur's lemma or, equivalently, use known properties of the $ \mathrm{SU}(d) $ Lie algebra (see e.g. formula (16) of \cite{HabH2021}\footnote{note, however, that our choice of normalization  of the $ \operatorname{T}_{\mathscr{i}} $'s differs from \cite{HabH2021} by a factor $ \sqrt{2} $}) 
\begin{align}
&	\operatorname{T}_{\mathscr{i}}^{2} =\frac{\operatorname{1}_{d}}{d}+\sum_{\mathscr{j}=1}^{d^{2}-1}\mathscr{d}_{\mathscr{i},\mathscr{i},\mathscr{j}}\operatorname{T}_{\mathscr{j}}
\nonumber\\
& \sum_{\mathscr{i}=1}^{d^{2}-1}\mathscr{d}_{\mathscr{i},\mathscr{i},\mathscr{j}}=0 & \forall\,j
	\nonumber
\end{align}
to prove (\ref{me:povm}).
\end{proof}
In consequence of \ref{me:i}-\ref{me:iii}, the canonical master equation is readily trace and self-adjoint preserving. Under reasonable regularity of assumptions on the dependence of canonical couplings, Hamiltonian, and decoherence operators,
 solutions of (\ref{me:me}) exist and are unique, globally. 
This result is better appreciated if we lift (\ref{me:me}) to a linear system of differential equations by means of the channel-state isomorphism see e.g.
\cite{ChrD2022}  or \S~4.2 of \cite{HoJo1991}. To this goal, we introduce the lexicographic map (see e. g. chapter~9 of \cite{BeZy2006})
\begin{align}
	\ell \colon \mathbb{N}\,\times\,\mathbb{N}\mapsto \mathbb{N}
	\nonumber
\end{align}
defined as
\begin{align}
	\ell (i,j)=d\,(i-1)+j
	\nonumber
\end{align}
and we use it to establish a one-to-one correspondence between elements of $ \mathcal{M}_{d} $ and vectors in $ \mathbb{C}^{d^{2}} $ by 
{reshaping} \cite{MisJ2011}
\begin{align}
	\left \langle\,\operatorname{e}_{\ell(i,j)}^{(d^{2})}\,,{\operatorname{res}}(\bm{\rho}_{t})\,\right\rangle=
	\left \langle\,\operatorname{e}_{i}^{(d)}\,,\bm{\rho}_{t}  \operatorname{e}_{j}^{(d)}\,\right\rangle
	\nonumber
\end{align}
where $ \big{\{ }\operatorname{e}_{i}^{(d)}  \big{\}}_{i=1}^{d}$ and $ \big{\{} \operatorname{e}_{i}^{(d^{2})} \big{\}}_{i=1}^{d^{2}} $ are respectively the canonical bases of the complex Euclidean spaces $ \mathbb{C}^{d} $ and $ \mathbb{C}^{d} $ while $ \left \langle\,\cdot,\,\cdot\right\rangle $ denotes the natural inner product. Once equipped with these definitions we see that (\ref{me:me}) becomes 
\begin{align}
	\frac{\mathrm{d}}{\mathrm{d} t}{\operatorname{res}}(\bm{\rho}_{t})=\mathscr{L}_{t}{\operatorname{res}}(\bm{\rho}_{t})
	\nonumber
\end{align}
where $ \mathscr{L} $ is the $ d^{2}\,\times\,d^{2} $ matrix
\begin{align}
 &\mathscr{L}_{t} =-\imath\,\big{(}\operatorname{H}_{t}\,\otimes\,\operatorname{1}_{d}-\operatorname{1}_{d}\,\otimes\,\operatorname{H}_{t}^{\top}\big{)}
 +\sum_{\mathscr{l}=1}^{d^{2}-1}\mathscr{w}_{\mathscr{l}; t}\mathscr{D}_{\operatorname{L}_{\mathscr{l};t}}
 \nonumber\\
 &
 \mathscr{D}_{\operatorname{L}_{\mathscr{l}; t}}=
 \operatorname{L}_{\mathscr{l}; t}\,\otimes\,\bar{\operatorname{L}}_{\mathscr{l}; t}
 -\frac{\operatorname{L}_{\mathscr{l}; t}^{\dagger}\operatorname{L}_{\mathscr{l}; t}\,\otimes\,\operatorname{1}_{d}+\operatorname{1}_{d}\,\otimes\,\operatorname{L}_{\mathscr{l}; t}^{\top}\bar{\operatorname{L}}_{\mathscr{l}; t}}{2}
	\nonumber
\end{align}
It becomes then evident how to apply the standard theorems of existence and uniqueness of linear ordinary differential equations \cite{CeCeVu2009}.
Furthermore, we can always write the solution of (\ref{me:me}) as
\begin{align}
	{\operatorname{res}}(\bm{\rho}_{t})=\mathsf{F}_{t,s}{\operatorname{res}}(\bm{\rho}_{s})
	\nonumber
\end{align}
where $ \mathsf{F}_{t,s}\in \mathcal{M}_{d^{2}} $ is a linear flow solution of
\begin{align}
&	\frac{\mathrm{d}}{\mathrm{d} t}\mathsf{F}_{t,s}=\mathscr{L}_{t}\mathsf{F}_{t,s}
	\nonumber\\
&\mathsf{F}_{t,t}=\operatorname{1}_{d^{2}}& \forall\,t
\nonumber	
\end{align}
An important consequence of {reshaping} stems from a paramount observation made by Sudarshan, Mathews and Rau in \cite{SuMaRa1961}.
Self-adjoint preservation translates into the condition
\begin{align}
	\mathsf{F}_{t,s}^{(\mathsf{R})\dagger}=\mathsf{F}_{t,s}^{(\mathsf{R})}
	\nonumber
\end{align}
where $ \mathsf{F}_{t,s}^{(\mathsf{R})} $ is the reshuffled (or Choi) matrix whose components are related to that of the flow
by the relations
\begin{align}
\left \langle\,\bm{e}_{\ell(i,j)}^{(d^{2})}	\,,\mathsf{F}_{t,s}^{(\mathsf{R})}\bm{e}_{\ell(m,n)}^{(d^{2})}\,\right\rangle=	\left \langle\,\bm{e}_{\ell(i,m)}^{(d^{2})}	\,,\mathsf{F}_{t,s}\bm{e}_{\ell(j,n)}^{(d^{2})}\,\right\rangle
\label{me:reshuffled}
\end{align}
The reshuffled matrix is also adapted to analyze positivity of the dynamics: being self-adjoint, it has real eigenvalues $ \left\{ f_{n} \right\}_{n=1}^{d^{2}} $ and admits the spectral decomposition
\begin{align}
	\mathsf{F}_{t,s}^{(\mathsf{R})}=\sum_{n=1}^{d^{2}} \operatorname{sign}(f_{n} ) \bm{v}_{n}\bm{v}_{n}^{\dagger}
	\nonumber
\end{align}
having reabsorbed the absolute value of the $ f_{n} $'s in the normalization of the corresponding eigenvectors $ \bm{v}_{n} $. We thus arrive at
\begin{align}
	\left \langle\,\bm{e}_{\ell(i,m)}^{(d^{2})}	\,,\mathsf{F}_{t,s}\bm{e}_{\ell(j,n)}^{(d^{2})}\,\right\rangle
	=\sum_{n=1}^{d^{2}} \operatorname{sign}(f_{n; t} )
	\left \langle\,\bm{e}_{\ell(i,j)}^{(d^{2})}	\,, \bm{v}_{n; t}\,\right\rangle 
	\left \langle\,\bm{v}_{n; t}\,, \bm{e}_{\ell(m,n)}^{(d^{2})}\,\right\rangle
	\nonumber
\end{align}
which projected back to $ \mathcal{M}_{d} $ {yields} the operator sum representation of the (semi)-group solution of (\ref{me:me})
\begin{align}
	\bm{\rho}_{t}=\sum_{n=1}^{d^{2}} \operatorname{sign}(f_{n; t} )\operatorname{V}_{n; t,s}\bm{\rho}_{s}\operatorname{V}_{n; t,s}^{\dagger}
	\label{me:cb}
\end{align}
Trace preservation corresponds to the additional condition
\begin{align}
	\sum_{n=1}^{d^{2}} \operatorname{sign}(f_{n; t} )\operatorname{V}_{n; t,s}^{\dagger}\operatorname{V}_{n; t,s}=\operatorname{1}_{d}
	\nonumber
\end{align}
Choi theorem \cite{ChoM1975} shows that complete positivity in the finite dimensional case reduces to the condition that
the reshuffling of flow matrix $ \mathsf{F}_{t,s}^{(R)} $ has a positive spectrum. It becomes then manifest that complete positivity is 
a sufficient condition to map arbitrary positive operators into positive ones.
We also learn that in general solutions of (\ref{me:me}) can be written as the difference of two completely positive operator sums individually non-trace preserving.
The derivation of (\ref{me:cb}) takes advantage of the hypothesis of a finite dimensional operator space. In fact, a result in abstract linear operator algebra due to Wittstock \cite{WitG1981}  and Paulsen \cite{PauV1982} allows us to identify (\ref{me:cb}) with the canonical decomposition of a \textquotedblleft completely bounded\textquotedblright\  operator \cite{JoKrPa2009}. For our purposes, it is sufficient to observe that an element of $ \mathcal{M}_{d} $ is completely bounded if it is bounded \cite{JoKrPa2009}. We refer to  this latter reference together with \cite{ChrD2022} and chapters~10-11 of \cite{BeZy2006} for further details.

\section{Unraveling of the Lindblad-Gorini-Kossakowski-Sudarshan (completely positive) master equation}
\label{sec:LGKS}

Lindblad \cite{LinG1976} and Gorini, Kossakowski and Sudarshan \cite{GoKoSu1976} proved that the flow matrix $ \mathsf{F}_{t,s} $ with $t\,\geq\,s\,\geq\,0  $ is completely positive if and only if  the canonical couplings are positive. In such a case, {we can identify the canonical couplings} with the rates 
of a counting process
\begin{align}
	\mathscr{w}_{\mathscr{l}; t}=\mathscr{r}_{\mathscr{l}; t}\,\geq\,0 \hspace{1.0cm}& \forall\,t\in \mathbb{R}_{+}\hspace{0.5cm}\&\hspace{0.5cm}\mathscr{l}=1,\dots,d^{2}-1
	\label{LGKS:rates}
\end{align}
When the conditions (\ref{LGKS:rates}) hold true, we write the master equation as
\begin{align}
	\label{LGKS:eq}
			\partial_{t}\bm{\rho}_{t}=-\imath\,\left[\operatorname{H}_{t}\,,\bm{\rho}_{t}\right]
			+\sum_{\mathscr{l}=1}^{d^{2}-1}\mathscr{r}_{\mathscr{l}; t}\operatorname{D}_{\operatorname{L}_{\mathscr{l};t}}(\bm{\rho}_{t})
		\end{align}
and we are in the position to prove \cite{BaBe1991,CarH1993,DaCaMo1992}
\begin{theorem}{proposition}
	\label{prop:LGKS}
	The solution of (\ref{LGKS:eq}) admits the Monte-Carlo representation
	\begin{align}
		\bm{\rho}_{t}=\operatorname{E}(\bm{\psi}_{t}\bm{\psi}_{t}^{\dagger})
		\label{LGKS:main}
	\end{align}
where $ \operatorname{E} $ is the expectation over the solution paths of the It\^o \cite{KleF2005} stochastic differential equations
	\begin{subequations}
	\label{LGKS:sse}
	\begin{align}
		&\label{LGKS:sse1}
		\mathrm{d}\bm{\psi}_{t}=\mathrm{d}t\,\bm{f}_{t}+\sum_{\mathscr{l}=1}^{d^{2}-1}\mathrm{d}{\nu}_{\mathscr{l};t}\left(\frac{\operatorname{L}_{\mathscr{l}; t}\bm{\psi}_{t}}{\left\|\operatorname{L}_{\mathscr{l}; t}\bm{\psi}_{t}\right\|}-\bm{\psi}_{t}\right)
		\\
		&\label{LGKS:sse2}
		\bm{f}_{t}=-\imath\,\operatorname{H}_{t}\bm{\psi}_{t}-
		\sum_{\mathscr{l}=1}^{d^{2}-1}\mathscr{r}_{\mathscr{l};t}\frac{\operatorname{L}_{\mathscr{l}; t}^{\dagger}\operatorname{L}_{\mathscr{l}; t}-\left\|\operatorname{L}_{\mathscr{l}; t}\bm{\psi}_{t}\right\|^{2}\operatorname{1}_{d}}{2}\bm{\psi}_{t}
		\\
		&\label{LGKS:sse3}
		\bm{\psi}_{\ti}=\bm{z}
	\end{align}
\end{subequations}
and of those (obviously related) for adjoint dual $ \bm{\psi}_{t}^{\dagger} $. The It\^o stochastic differential equations (\ref{LGKS:sse}) 
are driven by $ d^{2}-1 $ counting processes 
	\begin{subequations}
	\label{LGKS:counting}
	\begin{align}
		&\label{LGKS:counting1}
		{\mathrm{d}\nu_{\mathscr{l};t}\mathrm{d}\nu_{\mathscr{k},t}=\delta_{\mathscr{l},\mathscr{k}}\mathrm{d}\nu_{\mathscr{l};t}}
		\\
		&\label{LGKS:counting2}
		\operatorname{E}\Big{(}\mathrm{d}\nu_{\mathscr{l};t}\big{|}\big{\{} \bm{\psi}_{t}\,,\bm{\psi}_{t}^{\dagger} \big{\}}\cap\mathfrak{F}_{t}\Big{)}
		=\mathscr{r}_{\mathscr{l};t}\left\|\operatorname{L}_{\mathscr{l}; t}\bm{\psi}_{t}\right\|^{2}\mathrm{d}t
	\end{align}
adapted to the natural filtration \cite{KleF2005} $ \left\{ \mathfrak{F}_{t} \right\}_{t\,\geq\,\ti} $  of 
the process $ \big{\{} \bm{\psi}_{t},\bm{\psi}_{t}^{\dagger}\big{\}}_{t\,\geq\,\ti} $.
\end{subequations}
\end{theorem}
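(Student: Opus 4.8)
The plan is to show that the operator-valued function $\bm{\sigma}_{t}:=\operatorname{E}(\bm{\psi}_{t}\bm{\psi}_{t}^{\dagger})$ solves the linear master equation (\ref{LGKS:eq}) with the same initial datum as $\bm{\rho}_{t}$, and then invoke the global existence and uniqueness for (\ref{me:me}) recalled in Section~\ref{sec:me} to conclude $\bm{\sigma}_{t}=\bm{\rho}_{t}$. For a pure initial state one has $\bm{\sigma}_{\ti}=\bm{z}\bm{z}^{\dagger}=\bm{\rho}_{\ti}$ directly; a general $\bm{\rho}_{\ti}$ is recovered by drawing $\bm{z}$ from any ensemble with $\operatorname{E}(\bm{z}\bm{z}^{\dagger})=\bm{\rho}_{\ti}$ and appealing to linearity of both sides. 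Before differentiating I would dispose of two well-posedness points. First, (\ref{LGKS:sse}) preserves the norm, $\|\bm{\psi}_{t}\|=\|\bm{z}\|=1$: the drift contribution to $\mathrm{d}\|\bm{\psi}_{t}\|^{2}$ equals $-\sum_{\mathscr{l}}\mathscr{r}_{\mathscr{l};t}\|\operatorname{L}_{\mathscr{l};t}\bm{\psi}_{t}\|^{2}(1-\|\bm{\psi}_{t}\|^{2})$ and each jump sends $\bm{\psi}_{t}$ to a unit vector, so both increments vanish on $\|\bm{\psi}_{t}\|=1$. Second, the jump intensities (\ref{LGKS:counting2}) carry the factor $\|\operatorname{L}_{\mathscr{l};t}\bm{\psi}_{t}\|^{2}$, so a jump through channel $\mathscr{l}$ can fire only where the denominator in (\ref{LGKS:sse1}) is nonzero; together with boundedness of the coefficients on the unit sphere this makes the process non-explosive and uniformly bounded, so that $\bm{\psi}_{t}\bm{\psi}_{t}^{\dagger}$ is integrable and $\operatorname{E}$ commutes with $\tfrac{\mathrm{d}}{\mathrm{d}t}$.

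The computational core is It\^o's formula for the product of the jump--diffusion $\bm{\psi}_{t}$ with its adjoint, $\mathrm{d}(\bm{\psi}_{t}\bm{\psi}_{t}^{\dagger})=(\mathrm{d}\bm{\psi}_{t})\bm{\psi}_{t}^{\dagger}+\bm{\psi}_{t}(\mathrm{d}\bm{\psi}_{t}^{\dagger})+(\mathrm{d}\bm{\psi}_{t})(\mathrm{d}\bm{\psi}_{t}^{\dagger})$. The continuous part produces $-\imath[\operatorname{H}_{t},\bm{\psi}_{t}\bm{\psi}_{t}^{\dagger}]$ together with $-\tfrac12\sum_{\mathscr{l}}\mathscr{r}_{\mathscr{l};t}\{\operatorname{L}_{\mathscr{l};t}^{\dagger}\operatorname{L}_{\mathscr{l};t},\bm{\psi}_{t}\bm{\psi}_{t}^{\dagger}\}$ and a nonlinear remainder $+\sum_{\mathscr{l}}\mathscr{r}_{\mathscr{l};t}\|\operatorname{L}_{\mathscr{l};t}\bm{\psi}_{t}\|^{2}\bm{\psi}_{t}\bm{\psi}_{t}^{\dagger}$. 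For the jump part I would use (\ref{LGKS:counting1}) to keep only one channel active at a time and the elementary identity $(a-b)b^{\dagger}+b(a-b)^{\dagger}+(a-b)(a-b)^{\dagger}=aa^{\dagger}-bb^{\dagger}$ with $a=\operatorname{L}_{\mathscr{l};t}\bm{\psi}_{t}/\|\operatorname{L}_{\mathscr{l};t}\bm{\psi}_{t}\|$ and $b=\bm{\psi}_{t}$, so that the three jump-induced pieces telescope to $\sum_{\mathscr{l}}\bigl(\operatorname{L}_{\mathscr{l};t}\bm{\psi}_{t}\bm{\psi}_{t}^{\dagger}\operatorname{L}_{\mathscr{l};t}^{\dagger}\|\operatorname{L}_{\mathscr{l};t}\bm{\psi}_{t}\|^{-2}-\bm{\psi}_{t}\bm{\psi}_{t}^{\dagger}\bigr)\,\mathrm{d}\nu_{\mathscr{l};t}$. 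Taking the conditional expectation through (\ref{LGKS:counting2}) replaces $\mathrm{d}\nu_{\mathscr{l};t}$ by $\mathscr{r}_{\mathscr{l};t}\|\operatorname{L}_{\mathscr{l};t}\bm{\psi}_{t}\|^{2}\mathrm{d}t$, which cancels the denominator and leaves $\sum_{\mathscr{l}}\mathscr{r}_{\mathscr{l};t}\bigl(\operatorname{L}_{\mathscr{l};t}\bm{\psi}_{t}\bm{\psi}_{t}^{\dagger}\operatorname{L}_{\mathscr{l};t}^{\dagger}-\|\operatorname{L}_{\mathscr{l};t}\bm{\psi}_{t}\|^{2}\bm{\psi}_{t}\bm{\psi}_{t}^{\dagger}\bigr)\mathrm{d}t$.

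The decisive observation is that the stray $+\sum_{\mathscr{l}}\mathscr{r}_{\mathscr{l};t}\|\operatorname{L}_{\mathscr{l};t}\bm{\psi}_{t}\|^{2}\bm{\psi}_{t}\bm{\psi}_{t}^{\dagger}$ from the drift is cancelled exactly by the $-\|\operatorname{L}_{\mathscr{l};t}\bm{\psi}_{t}\|^{2}\bm{\psi}_{t}\bm{\psi}_{t}^{\dagger}$ piece of the compensated jump term; this is precisely why the normalized nonlinear equation (\ref{LGKS:sse}) unravels the linear equation (\ref{LGKS:eq}). After the cancellation one obtains, under $\operatorname{E}$,
\[
\frac{\mathrm{d}}{\mathrm{d}t}\bm{\sigma}_{t}=-\imath[\operatorname{H}_{t},\bm{\sigma}_{t}]+\sum_{\mathscr{l}=1}^{d^{2}-1}\mathscr{r}_{\mathscr{l};t}\Bigl(\operatorname{L}_{\mathscr{l};t}\bm{\sigma}_{t}\operatorname{L}_{\mathscr{l};t}^{\dagger}-\tfrac12\{\operatorname{L}_{\mathscr{l};t}^{\dagger}\operatorname{L}_{\mathscr{l};t},\bm{\sigma}_{t}\}\Bigr),
\]
which is (\ref{LGKS:eq}) once one recognizes that expanding the commutators in (\ref{me:me2}) gives $\operatorname{D}_{\operatorname{L}}(\bm{\rho})=\operatorname{L}\bm{\rho}\operatorname{L}^{\dagger}-\tfrac12\{\operatorname{L}^{\dagger}\operatorname{L},\bm{\rho}\}$. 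Uniqueness for (\ref{me:me}) then yields (\ref{LGKS:main}); the statement for $\bm{\psi}_{t}^{\dagger}$ is read off the same computation. I expect the only genuinely delicate step to be the bookkeeping of the jump It\^o calculus — in particular disentangling the nonlinear term generated by the drift from the one produced when compensating the counting-process jumps and checking that they match — the remaining analytic points being routine given the norm preservation and boundedness secured at the outset.
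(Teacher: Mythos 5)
Your proposal is correct and follows essentially the same route as the paper's proof: It\^o's product formula for the jump process, the tower property of conditional expectation to replace $\mathrm{d}\nu_{\mathscr{l};t}$ by its intensity $\mathscr{r}_{\mathscr{l};t}\|\operatorname{L}_{\mathscr{l};t}\bm{\psi}_{t}\|^{2}\mathrm{d}t$, and the cancellation of the nonlinear $\|\operatorname{L}_{\mathscr{l};t}\bm{\psi}_{t}\|^{2}\bm{\psi}_{t}\bm{\psi}_{t}^{\dagger}$ terms between drift and compensated jumps. Your added remarks (norm preservation, non-explosion, explicit appeal to uniqueness of the linear master equation, and randomizing $\bm{z}$ for mixed initial data) are sound refinements of what the paper leaves implicit.
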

\begin{proof}~\newline
	The proof is an immediate consequence of It\^o lemma applied to counting processes. Namely, when we insert (\ref{LGKS:sse1}) and its
	adjoint dual in 
	\begin{align}
		\mathrm{d}(\bm{\psi}_{t}\bm{\psi}_{t}^{\dagger})=
		(\mathrm{d}\bm{\psi}_{t})\bm{\psi}_{t}^{\dagger}+\bm{\psi}_{t}\mathrm{d}\bm{\psi}_{t}^{\dagger}+(\mathrm{d}\bm{\psi}_{t})\mathrm{d}\bm{\psi}_{t}^{\dagger}
		\nonumber
	\end{align}
straightforward algebra using (\ref{LGKS:counting1}) yields
\begin{align}
	\mathrm{d}(\bm{\psi}_{t}\bm{\psi}_{t}^{\dagger})= (\bm{f}_{t}\bm{\psi}_{t}^{\dagger}+\bm{\psi}_{t}\bm{f}_{t}^{\dagger})\mathrm{d}t +\sum_{\mathscr{l}=1}^{d^{2}-1}\mathrm{d}{\nu}_{\mathscr{l};t}\left(\frac{\operatorname{L}_{\mathscr{l}; t}\bm{\psi}_{t}\bm{\psi}_{t}^{\dagger}\operatorname{L}_{\mathscr{l}; t}}{\left\|\operatorname{L}_{\mathscr{l}; t}\bm{\psi}_{t}\right\|^{2}}-\bm{\psi}_{t}\bm{\psi}_{t}^{\dagger}\right)
	\nonumber
\end{align}
Next, we can invoke the telescopic property of conditional expectations to evaluate the expectation value of the stochastic differential
\begin{align}
&	\operatorname{E}\left(\mathrm{d}{\nu}_{\mathscr{l};t}\left(\frac{\operatorname{L}_{\mathscr{l}; t}\bm{\psi}_{t}\bm{\psi}_{t}^{\dagger}\operatorname{L}_{\mathscr{l}; t}}{\left\|\operatorname{L}_{\mathscr{l}; t}\bm{\psi}_{t}\right\|^{2}}-\bm{\psi}_{t}\bm{\psi}_{t}^{\dagger}\right)\right)
\nonumber\\
&	=\operatorname{E}\left(\operatorname{E}\Big{(}\mathrm{d}\nu_{\mathscr{l};t}\big{|}\big{\{} \bm{\psi}_{t}\,,\bm{\psi}_{t}^{\dagger} \big{\}}\cap\mathfrak{F}_{t}\Big{)}\left(\frac{\operatorname{L}_{\mathscr{l}; t}\bm{\psi}_{t}\bm{\psi}_{t}^{\dagger}\operatorname{L}_{\mathscr{l}; t}}{\left\|\operatorname{L}_{\mathscr{l}; t}\bm{\psi}_{t}\right\|^{2}}-\bm{\psi}_{t}\bm{\psi}_{t}^{\dagger}\right)\right)
\nonumber\\
&=\mathscr{r}_{\mathscr{l}; t}\operatorname{E}\left(\operatorname{L}_{\mathscr{l}; t}\bm{\psi}_{t}\bm{\psi}_{t}^{\dagger}\operatorname{L}_{\mathscr{l}; t}
-\left\|\operatorname{L}_{\mathscr{l}; t}\bm{\psi}_{t}\right\|^{2}\bm{\psi}_{t}\bm{\psi}_{t}^{\dagger}\right )\mathrm{d}t
	\nonumber
\end{align}
We thus arrive at
\begin{align}
	\mathrm{d}\operatorname{E}\big{(}\bm{\psi}_{t}\bm{\psi}_{t}^{\dagger}\big{)}=
	-\imath\,\left[\operatorname{H}_{t}\,,\operatorname{E}\big{(}\bm{\psi}_{t}\bm{\psi}_{t}^{\dagger}\big{)}\right]\mathrm{d}t
	+\sum_{\mathscr{l}=1}^{d^{2}-1}\mathscr{r}_{\mathscr{l}; t}\operatorname{D}_{\operatorname{L}_{\mathscr{l};t}}\left(\operatorname{E}\big{(}\bm{\psi}_{t}\bm{\psi}_{t}^{\dagger}\big{)}\right)\mathrm{d}t
	\nonumber
\end{align}
\end{proof}
The interpretation of  the stochastic process $ \bm{\psi}_{t} $ as state vector is substantiated by the fact that for any initial data
assigned on the Bloch hyper-sphere
\begin{align}
	\left\|\bm{z}\right\|^{2}=1
	\nonumber
\end{align}
we get at any later time
\begin{align}
	\mathrm{d}\left\|\bm{\psi}_{t}\right\|^{2}=0
	\label{LGKS:Bloch}
\end{align}
Namely, the trace the of the stochastic increment
\begin{align}
	\sum_{\mathscr{l}=1}^{d^{2}-1}\mathrm{d}{\nu}_{\mathscr{l};t}\operatorname{Tr}\left(\frac{\operatorname{L}_{\mathscr{l}; t}\bm{\psi}_{t}\bm{\psi}_{t}^{\dagger}\operatorname{L}_{\mathscr{l}; t}}{\left\|\operatorname{L}_{\mathscr{l}; t}\bm{\psi}_{t}\right\|^{2}}-\bm{\psi}_{t}\bm{\psi}_{t}^{\dagger}\right)=\sum_{\mathscr{l}=1}^{d^{2}-1}\mathrm{d}{\nu}_{\mathscr{l};t}
	\left(1-\left\|\bm{\psi}_{t}\right\|^{2}\right)
	\nonumber
\end{align}
vanishes if a jump occurs when the process has unit norm. Similarly, the trace of the drift
\begin{align}
&	\operatorname{Tr}\left (\bm{f}_{t}\bm{\psi}_{t}^{\dagger}+\bm{\psi}_{t}\bm{f}_{t}^{\dagger}\right )=
\nonumber\\
&	-\imath\operatorname{Tr}\left[\operatorname{H}_{t}\,,\bm{\psi}_{t}\bm{\psi}_{t}^{\dagger}\right]
+\sum_{\mathscr{l}=1}^{d^{2}-1}\mathscr{r}_{\mathscr{l};t}\left(\left\|\operatorname{L}_{\mathscr{l}; t}\bm{\psi}_{t}\right\|^{2}-\left\|\operatorname{L}_{\mathscr{l}; t}\bm{\psi}_{t}\right\|^{2}\left\|\bm{\psi}_{t}\right\|^{2}
\right)
	\nonumber
\end{align}
identically vanishes {for realizations of the process with unit norm}. The pathwise preservation of the Bloch hyper-sphere plays an important role for {the interpretation of the stochastic} Schr\"odinger equation as a mathematical model of an indirect measurement record \cite{BaHo1995,WisH1996,BaLu2005}.

\section{Unraveling of the general canonical (completely bounded) master equation}
\label{sec:cb}

We now turn to the main result that we want to illustrate. We distinguish between the completely bounded (\ref{me:me})
and the completely positive (\ref{LGKS:eq}) master equations both being in canonical form -- the conditions (\ref{me:ortho}) hold in both cases --
but differing because the rates $ \mathscr{r}_{\mathscr{l}; t} $'s are positive functions whereas the canonical couplings $  \mathscr{w}_{\mathscr{l}; t}$ may have arbitrary sign. We prove \cite{DoMG2022,DoMG2023}
\begin{theorem}{proposition}
	\label{pp:main}
The solution of the canonical completely bounded master equation (\ref{me:me}) admits the Monte-Carlo representation
\begin{align}
	\bm{\rho}_{t}=\operatorname{E}\left(\mu_{t}\bm{\psi}_{t}\bm{\psi}_{t}^{\dagger}\right)
	\label{cb:main}
\end{align}
where $ \big{\{}\bm{\psi}_{t}\,,\bm{\psi}_{t}^{\dagger}\big{\}}_{t\,\geq\,\ti} $ are stochastic state vectors unraveling a completely positive master equation as in proposition~\ref{prop:LGKS}. The rates of entering the stochastic Schr\"odinger equation (\ref{LGKS:sse}) are related to the canonical couplings of the master equation (\ref{me:cb}) by the unraveling conditions
\begin{align}
	\mathscr{w}_{\mathscr{l}; t}=\mathscr{r}_{\mathscr{l}; t}-\mathscr{c}_{t}
	\label{cb:unraveling}
\end{align}
where  $ \mathscr{c}_{t}\colon\mathbb{R}_{+}\mapsto \mathbb{R}_{+}$ is any real positive function satisfying the condition
\begin{align}
	\mathscr{c}_{t} \,>\, -\min_{\mathscr{l}=1,\dots,d^{2}-1}\mathscr{w}_{\mathscr{l}; t}\,\equiv\,\left |\mathscr{w}_{\mathscr{l}_{\star};  t} \right |
	\label{cb:constraint}
\end{align}
The weighing term $ \mu_{t} $ in the expectation value (\ref{cb:main}) is a scalar martingale -- the \textbf{influence martingale} -- obeying the It\^o stochastic differential equation
\begin{subequations}
	\label{cb:martingale}
	\begin{align}
		&\label{cb:martingale1}	\mathrm{d}\mu_{t}=\mu_{t}\sum_{\mathscr{l}=1}^{d^{2}-1}\left(\frac{\mathscr{w}_{\mathscr{l};t}}{\mathscr{r}_{\mathscr{l};t}}-1\right)\,\mathrm{d}\iota_{\mathscr{l};t}
		\\
		&\label{cb:martingale2}
		\mathrm{d}\iota_{\mathscr{l};t}=	\mathrm{d}\nu_{\mathscr{l};t}-\mathscr{r}_{\mathscr{l};t}\left\|\operatorname{L}_{\mathscr{l}; t}\bm{\psi}_{t}\right\|^{2}\mathrm{d}t
		\\
		&\label{cb:martingale3}\mu_{\ti}=1
	\end{align}
driven by the innovation processes (\ref{cb:martingale2}) (martingale part or compensated increments \cite{KleF2005}) of the increments of counting processes (\ref{LGKS:counting}).
\end{subequations}
\end{theorem}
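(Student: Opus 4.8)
The plan is to evaluate the Itô differential of the weighted dyadic $\mu_{t}\,\bm{\psi}_{t}\bm{\psi}_{t}^{\dagger}$ and to show that its Monte-Carlo average reproduces the canonical master equation (\ref{me:me}). The constraint (\ref{cb:constraint}) guarantees $\mathscr{r}_{\mathscr{l}; t}=\mathscr{w}_{\mathscr{l}; t}+\mathscr{c}_{t}>0$, so $\bm{\psi}_{t}$ is a bona fide stochastic state vector unraveling the completely positive master equation with rates $\mathscr{r}_{\mathscr{l}; t}$ in the sense of proposition~\ref{prop:LGKS}; in particular I may reuse the stochastic differential of $\bm{\psi}_{t}\bm{\psi}_{t}^{\dagger}$ computed there, the conditional rate identity (\ref{LGKS:counting2}), and the pathwise conservation $\|\bm{\psi}_{t}\|^{2}=1$ of (\ref{LGKS:Bloch}).

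First I would apply the Itô product rule
\begin{align}
\mathrm{d}(\mu_{t}\bm{\psi}_{t}\bm{\psi}_{t}^{\dagger})=(\mathrm{d}\mu_{t})\,\bm{\psi}_{t}\bm{\psi}_{t}^{\dagger}+\mu_{t}\,\mathrm{d}(\bm{\psi}_{t}\bm{\psi}_{t}^{\dagger})+(\mathrm{d}\mu_{t})\,\mathrm{d}(\bm{\psi}_{t}\bm{\psi}_{t}^{\dagger})
\nonumber
\end{align}
The key point is the quadratic-covariation term: since $\mathrm{d}\mu_{t}$ (through the innovations $\mathrm{d}\iota_{\mathscr{l}; t}$) and $\mathrm{d}(\bm{\psi}_{t}\bm{\psi}_{t}^{\dagger})$ are driven by the very same counting processes, the product is collapsed by (\ref{LGKS:counting1}), $\mathrm{d}\nu_{\mathscr{l}; t}\mathrm{d}\nu_{\mathscr{k}; t}=\delta_{\mathscr{l},\mathscr{k}}\mathrm{d}\nu_{\mathscr{l}; t}$, leaving the diagonal contribution $\mu_{t}\sum_{\mathscr{l}}(\tfrac{\mathscr{w}_{\mathscr{l}; t}}{\mathscr{r}_{\mathscr{l}; t}}-1)\,\mathrm{d}\nu_{\mathscr{l}; t}$ times the $\mathscr{l}$-th jump increment of $\bm{\psi}_{t}\bm{\psi}_{t}^{\dagger}$. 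Adding this to the jump part of $\mu_{t}\,\mathrm{d}(\bm{\psi}_{t}\bm{\psi}_{t}^{\dagger})$ the prefactor $1$ is promoted to $\mathscr{w}_{\mathscr{l}; t}/\mathscr{r}_{\mathscr{l}; t}$: this is the mechanism by which the influence martingale trades rates for couplings.

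Next I would take expectations. The term $(\mathrm{d}\mu_{t})\,\bm{\psi}_{t}\bm{\psi}_{t}^{\dagger}$ is an $\mathfrak{F}_{t}$-measurable quantity multiplied by $\mathrm{d}\iota_{\mathscr{l}; t}$, whose conditional expectation given $\mathfrak{F}_{t}$ vanishes, so it drops out — the same remark shows $\mu_{t}$ is a local martingale, and the boundedness of $\mathscr{w}_{\mathscr{l}; t}/\mathscr{r}_{\mathscr{l}; t}-1$ (which holds under the regularity hypotheses recalled in section~\ref{sec:me}) upgrades it to a genuine martingale and legitimates the interchange of $\operatorname{E}$ with $\mathrm{d}$. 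For the jump part I would use the telescopic property of conditional expectations together with (\ref{LGKS:counting2}): the factor $\mathscr{r}_{\mathscr{l}; t}\|\operatorname{L}_{\mathscr{l}; t}\bm{\psi}_{t}\|^{2}$ cancels the denominator and one power of $\mathscr{r}_{\mathscr{l}; t}$, producing $\mathscr{w}_{\mathscr{l}; t}\operatorname{E}(\mu_{t}(\operatorname{L}_{\mathscr{l}; t}\bm{\psi}_{t}\bm{\psi}_{t}^{\dagger}\operatorname{L}_{\mathscr{l}; t}^{\dagger}-\|\operatorname{L}_{\mathscr{l}; t}\bm{\psi}_{t}\|^{2}\bm{\psi}_{t}\bm{\psi}_{t}^{\dagger}))$. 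Substituting $\mathscr{r}_{\mathscr{l}; t}=\mathscr{w}_{\mathscr{l}; t}+\mathscr{c}_{t}$ in the drift $\bm{f}_{t}$ of (\ref{LGKS:sse2}), the $\mathscr{w}_{\mathscr{l}; t}$-proportional pieces of the averaged drift and of the averaged jump term assemble into $\sum_{\mathscr{l}}\mathscr{w}_{\mathscr{l}; t}\operatorname{D}_{\operatorname{L}_{\mathscr{l}; t}}(\operatorname{E}(\mu_{t}\bm{\psi}_{t}\bm{\psi}_{t}^{\dagger}))$ and the Hamiltonian piece into the commutator $-\imath[\operatorname{H}_{t},\operatorname{E}(\mu_{t}\bm{\psi}_{t}\bm{\psi}_{t}^{\dagger})]$, using linearity of $\operatorname{D}_{\operatorname{L}_{\mathscr{l}; t}}$. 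The only leftover is the $\mathscr{c}_{t}$-proportional remainder $-\mathscr{c}_{t}\,\operatorname{E}(\mu_{t}\sum_{\mathscr{l}}[\tfrac{1}{2}\{\operatorname{L}_{\mathscr{l}; t}^{\dagger}\operatorname{L}_{\mathscr{l}; t}\,,\bm{\psi}_{t}\bm{\psi}_{t}^{\dagger}\}-\|\operatorname{L}_{\mathscr{l}; t}\bm{\psi}_{t}\|^{2}\bm{\psi}_{t}\bm{\psi}_{t}^{\dagger}])$, and here the Proposition (\ref{me:povm}) finishes the argument: $\sum_{\mathscr{l}}\operatorname{L}_{\mathscr{l}; t}^{\dagger}\operatorname{L}_{\mathscr{l}; t}=\mathscr{g}\operatorname{1}_{d}$ turns the anticommutator sum into $\mathscr{g}\,\bm{\psi}_{t}\bm{\psi}_{t}^{\dagger}$ while $\sum_{\mathscr{l}}\|\operatorname{L}_{\mathscr{l}; t}\bm{\psi}_{t}\|^{2}=\mathscr{g}\|\bm{\psi}_{t}\|^{2}=\mathscr{g}$, so the two cancel identically and the output is insensitive to the choice of $\mathscr{c}_{t}$, as it must be.

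Collecting the pieces, $\bm{\rho}_{t}:=\operatorname{E}(\mu_{t}\bm{\psi}_{t}\bm{\psi}_{t}^{\dagger})$ solves (\ref{me:me}) with the pure initial datum $\bm{z}\bm{z}^{\dagger}$ coming from $\mu_{\ti}=1$, $\bm{\psi}_{\ti}=\bm{z}$, $\|\bm{z}\|^{2}=1$; by the global existence and uniqueness recalled in section~\ref{sec:me} this identifies it with the solution of (\ref{me:me}), which is (\ref{cb:main}), and the extension from pure to arbitrary $\bm{\rho}_{\ti}$ follows from linearity of (\ref{me:me}) in the initial condition. I expect the main obstacle to be the bookkeeping of the covariation cross term — the combinatorial factor there is exactly what performs the substitution $\mathscr{r}_{\mathscr{l}; t}\mapsto\mathscr{w}_{\mathscr{l}; t}$ — together with the a priori integrability estimate needed to promote $\mu_{t}$ to a true martingale and to exchange $\operatorname{E}$ with $\mathrm{d}$; once these are secured, everything else is the same elementary algebra as in proposition~\ref{prop:LGKS}, sealed by the Proposition (\ref{me:povm}).
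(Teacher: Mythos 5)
Your proposal is correct and follows essentially the same route as the paper: Itô product rule on $\mu_{t}\bm{\psi}_{t}\bm{\psi}_{t}^{\dagger}$, collapse of the quadratic covariation via $\mathrm{d}\nu_{\mathscr{l};t}\mathrm{d}\nu_{\mathscr{k};t}=\delta_{\mathscr{l},\mathscr{k}}\mathrm{d}\nu_{\mathscr{l};t}$ promoting the jump prefactor to $\mathscr{w}_{\mathscr{l};t}/\mathscr{r}_{\mathscr{l};t}$, telescopic conditional expectations with (\ref{LGKS:counting2}), and the Proposition (\ref{me:povm}) together with Bloch hyper-sphere preservation to eliminate the $\mathscr{c}_{t}$-dependence. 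The only cosmetic difference is ordering — the paper rewrites the drift (\ref{cb:drift}) with the couplings before averaging, whereas you cancel the $\mathscr{c}_{t}$-remainder after taking expectations — and your added remarks on the local-to-true martingale upgrade and on uniqueness/linearity in the initial datum are harmless refinements of what the paper leaves implicit.
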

\begin{proof}~\newline
Owing to (\ref{LGKS:Bloch}) the dynamics preserves the Bloch hyper-sphere. Hence,  for any initial data on the Bloch hyper-sphere, 
the insertion of the unraveling conditions in (\ref{cb:unraveling}) combined with (\ref{me:povm}) leads to the chain of identities
\begin{align}
&	\sum_{\mathscr{l}=1}^{d^{2}-1}\mathscr{r}_{\mathscr{l};t}\frac{\operatorname{L}_{\mathscr{l}; t}^{\dagger}\operatorname{L}_{\mathscr{l}; t}-\left\|\operatorname{L}_{\mathscr{l}; t}\bm{\psi}_{t}\right\|^{2}\operatorname{1}_{d}}{2}\bm{\psi}_{t}=
\nonumber\\
&	\sum_{\mathscr{l}=1}^{d^{2}-1}(\mathscr{w}_{\mathscr{l}; t}+c_{t})\frac{\operatorname{L}_{\mathscr{l}; t}^{\dagger}\operatorname{L}_{\mathscr{l}; t}-\left\|\operatorname{L}_{\mathscr{l}; t}\bm{\psi}_{t}\right\|^{2}\operatorname{1}_{d}}{2}\bm{\psi}_{t}
=	\sum_{\mathscr{l}=1}^{d^{2}-1}\mathscr{w}_{\mathscr{l}; t}\frac{\operatorname{L}_{\mathscr{l}; t}^{\dagger}\operatorname{L}_{\mathscr{l}; t}-\left\|\operatorname{L}_{\mathscr{l}; t}\bm{\psi}_{t}\right\|^{2}\operatorname{1}_{d}}{2}\bm{\psi}_{t}
	\nonumber
\end{align}
The conclusion is that (\ref{LGKS:sse2}) ensures that
\begin{align}
	\bm{f}_{t}=-\imath\,\operatorname{H}_{t}\bm{\psi}_{t}-
	\sum_{\mathscr{l}=1}^{d^{2}-1}\mathscr{w}_{\mathscr{l};t}\frac{\operatorname{L}_{\mathscr{l}; t}^{\dagger}\operatorname{L}_{\mathscr{l}; t}-\left\|\operatorname{L}_{\mathscr{l}; t}\bm{\psi}_{t}\right\|^{2}\operatorname{1}_{d}}{2}\bm{\psi}_{t}
	\label{cb:drift}
\end{align}
Next, we turn to analyze the role of the influence martingale. We apply Ito Lemma again
\begin{align}
	\mathrm{d}(\mu_{t}\bm{\psi}_{t}\bm{\psi}_{t}^{\dagger})=(\mathrm{d}\mu_{t})\bm{\psi}_{t}\bm{\psi}_{t}^{\dagger}+
	\mu_{t}\mathrm{d}(\bm{\psi}_{t}\bm{\psi}_{t}^{\dagger})+(\mathrm{d}\mu_{t})\mathrm{d}(\bm{\psi}_{t}\bm{\psi}_{t}^{\dagger})
	\nonumber
\end{align}
The last term -- { the stochastic quadratic variation}-- changes the intensity of the counting process
\begin{align}
	(\mathrm{d}\mu_{t})\mathrm{d}(\bm{\psi}_{t}\bm{\psi}_{t}^{\dagger})
	=\sum_{\mathscr{l}=1}^{d^{2}-1}\mathrm{d}{\nu}_{\mathscr{l};t}\left(\frac{\mathscr{w}_{\mathscr{l};t}}{\mathscr{r}_{\mathscr{l};t}}-1\right)\mu_{t}\left(\frac{\operatorname{L}_{\mathscr{l}; t}\bm{\psi}_{t}\bm{\psi}_{t}^{\dagger}\operatorname{L}_{\mathscr{l}; t}}{\left\|\operatorname{L}_{\mathscr{l}; t}\bm{\psi}_{t}\right\|^{2}}-\bm{\psi}_{t}\bm{\psi}_{t}^{\dagger}\right)
	\nonumber
\end{align}	
We arrive at
\begin{align}
&	\mathrm{d}(\mu_{t}\bm{\psi}_{t}\bm{\psi}_{t}^{\dagger})=(\mathrm{d}\mu_{t})\bm{\psi}_{t}\bm{\psi}_{t}^{\dagger}+
{\mu_{t}\,\big{(}\bm{f}_{t}\bm{\psi}_{t}^{\dagger}+\bm{\psi}_{t}\bm{f}_{t}^{\dagger}\big{)}}\mathrm{d}t
 \nonumber\\
& +\sum_{\mathscr{l}=1}^{d^{2}-1}\mathrm{d}{\nu}_{\mathscr{l};t}\frac{\mathscr{w}_{\mathscr{l};t}}{\mathscr{r}_{\mathscr{l};t}}\mu_{t}\left(\frac{\operatorname{L}_{\mathscr{l}; t}\bm{\psi}_{t}\bm{\psi}_{t}^{\dagger}\operatorname{L}_{\mathscr{l}; t}}{\left\|\operatorname{L}_{\mathscr{l}; t}\bm{\psi}_{t}\right\|^{2}}-\bm{\psi}_{t}\bm{\psi}_{t}^{\dagger}\right)
	\nonumber
\end{align}
The right hand side non-linearly depends on the state vector process. The non-linearity cancels when averaging over the realizations of the process.
Using again the telescopic property of the conditional expectation we get
\begin{align}
	\operatorname{E}\left((\mathrm{d}\mu_{t})\bm{\psi}_{t}\bm{\psi}_{t}^{\dagger}\right)=
	\operatorname{E}\left(\operatorname{E}\Big{(}\mathrm{d}\mu_{t}\big{|}\big{\{} \bm{\psi}_{t}\,,\bm{\psi}_{t}^{\dagger} \big{\}}\cap\mathfrak{F}_{t}\Big{)}\bm{\psi}_{t}\bm{\psi}_{t}^{\dagger}\right)=0
	\nonumber
\end{align}
because of the martingale property and
\begin{align}
&	\sum_{\mathscr{l}=1}^{d^{2}-1}\operatorname{E}\left(\mathrm{d}{\nu}_{\mathscr{l};t}\frac{\mathscr{w}_{\mathscr{l};t}}{\mathscr{r}_{\mathscr{l};t}}\,\mu_{t}\,\left(\frac{\operatorname{L}_{\mathscr{l}; t}\bm{\psi}_{t}\bm{\psi}_{t}^{\dagger}\operatorname{L}_{\mathscr{l}; t}}{\left\|\operatorname{L}_{\mathscr{l}; t}\bm{\psi}_{t}\right\|^{2}}-\bm{\psi}_{t}\bm{\psi}_{t}^{\dagger}\right)
	\right )
\nonumber\\
&	=\sum_{\mathscr{l}=1}^{d^{2}-1}\frac{\mathscr{w}_{\mathscr{l};t}}{\mathscr{r}_{\mathscr{l};t}}\operatorname{E}\left(\operatorname{E}\Big{(}\mathrm{d}\nu_{\mathscr{l};t}\big{|}\big{\{} \bm{\psi}_{t}\,,\bm{\psi}_{t}^{\dagger} \big{\}}\cap\mathfrak{F}_{t}\Big{)}\,\mu_{t}\,\left(\frac{\operatorname{L}_{\mathscr{l}; t}\bm{\psi}_{t}\bm{\psi}_{t}^{\dagger}\operatorname{L}_{\mathscr{l}; t}}{\left\|\operatorname{L}_{\mathscr{l}; t}\bm{\psi}_{t}\right\|^{2}}-\bm{\psi}_{t}\bm{\psi}_{t}^{\dagger}\right)\right )
	\nonumber\\
&=	\sum_{\mathscr{l}=1}^{d^{2}-1}\mathscr{w}_{\mathscr{l};t}\operatorname{E}\left(\operatorname{L}_{\mathscr{l}; t}\mu_{t}\bm{\psi}_{t}\bm{\psi}_{t}^{\dagger}\operatorname{L}_{\mathscr{l}; t}-\left\|\operatorname{L}_{\mathscr{l}; t}\bm{\psi}_{t}\right\|^{2}\mu_{t}\bm{\psi}_{t}\bm{\psi}_{t}^{\dagger}\right)\mathrm{d}t
\nonumber
\end{align}
Upon recalling the expression of the drift (\ref{cb:drift}) we see that the dependence upon the rates of the state vector process cancels out. We thus have proven that
\begin{align}
	\mathrm{d}\operatorname{E}\left(\mu_{t}\bm{\psi}_{t}\bm{\psi}_{t}^{\dagger}\right)=
	-\imath\,\left[\operatorname{H}_{t}\,,\operatorname{E}\left(\mu_{t}\bm{\psi}_{t}\bm{\psi}_{t}^{\dagger}\right)\right]\mathrm{d}t
	+\sum_{\mathscr{l}=1}^{d^{2}-1}\mathscr{w}_{\mathscr{l}; t}\operatorname{D}_{\operatorname{L}_{\mathscr{l};t}}\left(\operatorname{E}\left(\mu_{t}\bm{\psi}_{t}\bm{\psi}_{t}^{\dagger}\right)\right)\mathrm{d}t
	\nonumber
\end{align}
as claimed.
\end{proof}
The construction of the influence martingale in proposition~\ref{pp:main} differs from the original derivation in \cite{DoMG2022}. There, the proof holds for master equations not necessarily in canonical form: the conditions (\ref{me:ortho}) do not hold by hypothesis and their consequence (\ref{me:povm}) cannot be immediately invoked. {In \cite{DoMG2022}, however,} the drift of the state vector stochastic differential equation satisfies (\ref{cb:drift}) by hypothesis implying that the unraveling conditions (\ref{cb:unraveling}) become dispensable.  The dynamics of the state vector still pathwise preserves the Bloch hyper-sphere.  The drawback is that the average  (\ref{LGKS:main}) does not satisfy a completely positive master equation. As a matter of fact, we show in  \cite{DoMG2023} that the conditions (\ref{me:ortho}) {are sufficient but not necessary} to guarantee that (\ref{LGKS:main}) and (\ref{cb:main}) simultaneously satisfy a completely positive and a completely bounded master equations. Mathematically, master equations are always amenable to canonical form. Physically, a non-canonical form may be preferable to describe actual indirect-measurement experimental setups \cite{BrPe2002,WiMi2009}. 

\subsection{Relation of the influence martingale with the canonical form of completely bounded linear maps.}

From the mathematical point of view, the influence martingale is an extension of the Girsanov martingale for jump processes \cite{KleF2005}. The extension consists of removing the restriction to positive values. Thus, the influence martingale does not describe a change of probability measure. Together, the state vector process, its adjoint dual, and the influence martingale specify a Markov process.  This means that at any instant of time it is always possible to evaluate along a path of the state vector
\begin{align}
	\mu_{t}^{(\pm)}=\max\left(0\,,\pm\mu_{t}\right)	
	\nonumber
\end{align}
and to couch (\ref{cb:main}) into the form
\begin{align}
	\bm{\rho}_{t}=\operatorname{E}\left(\mu_{t}^{(+)}\bm{\psi}_{t}\bm{\psi}_{t}^{\dagger}\right )-\operatorname{E}\left(\mu_{t}^{(-)}\bm{\psi}_{t}\bm{\psi}_{t}^{\dagger}\right)
	\label{unraveling:WP}
\end{align}
	{It is possible to show} \cite{DoMG2022} that each of the two expectations values on the right hand side of (\ref{unraveling:WP}) are always individually amenable to completely positive yet non-trace preserving operator sums. Rather than repeating the analysis of \cite{DoMG2022} in the following section we present an equivalent result using the equation generating ostensible statistics \cite{WisH1996}.

\section{Ostensible statistics}
\label{sec:ostensible}

The stochastic Schr\"odinger equation (\ref{LGKS:sse}) is necessarily non-linear in order to pathwise preserve the Bloch hyper-sphere
\begin{align}
	\left\|\bm{\psi}_{t}\right\|^{2}=1 \hspace{1.0cm}& {\forall\,t\,\geq\,\ti}
	\nonumber
\end{align}
Similarly, the $ \mu_{t} $ needs to be a martingale in order to ensure trace preservation
\begin{align}
	1=\operatorname{Tr}\bm{\rho}_{t}=\operatorname{E}\left (\mu_{t}\,\operatorname{Tr}(\bm{\psi}_{t}\bm{\psi}_{t}^{\dagger})\right )=\operatorname{E}\mu_{t}
	\nonumber
\end{align}
Thus, the need for non-linearity of the state vector evolution and for the martingale property are connected. They both originate if we insist on     
{pathwise} enforcement of probability conservation. {The picture simplifies, if we regard the state vector only as a numerical tool to compute the state operator, the latter being the only physically relevant quantity  as it has also been prominently advocated in \cite{WeiS2014}}. In such a case, the only requirement on the unraveling Markov process is that of generating an ostensible statistics \cite{GaWi2001} whose second order moments average to the solution of the master equation. We refer to \cite{BaPePe2012} for a thorough analysis of the mathematical connection of ostensible statistics with the stochastic Schr\"odinger equation and quantum state diffusion unraveling a completely positive master equation.
We proceed here as in \cite{DoMG2022}: the couplings in (\ref{me:me}) are non positive definite but we do not invoke (\ref{me:povm}). 
We also release the hypothesis that the number of decoherence operators is equal to $ d^{2}-1 $.
We thus consider the linear system of stochastic differential equations for the ostensible state vector
\begin{subequations}
	\label{ost:sde}
	\begin{align}
		&\label{ost:sde1}
		\mathrm{d}\bm{\phi}_{t}=\operatorname{A}_{t}\bm{\phi}_{t}\mathrm{d}t+\sum_{\mathscr{l}=1}^{\mathscr{L}}\mathrm{d}\varpi_{\mathscr{l};  t}\left(\operatorname{L}_{\mathscr{l}; t}\bm{\phi}_{t}-\bm{\phi}_{t}\right)
		\\
		&\label{ost:sde2}
		\mathrm{d}\lambda_{t}= \lambda_{t}\sum_{\mathscr{l}=1}^{\mathscr{L}}\mathrm{d}\varpi_{\mathscr{l}; t}\,\left(\frac{\mathscr{w}_{\mathscr{l}; t}}{\mathscr{r}_{\mathscr{l}; t}}-1\right)
	\end{align}
\end{subequations}
driven by $ \mathscr{L} $ independent Poisson processes specified by 
\begin{subequations}
	\label{ost:Poisson}
	\begin{align}
		&\label{ost:Poisson1}
		\mathrm{d}\varpi_{\mathscr{l};  t}\mathrm{d}\varpi_{\mathscr{k};  t}=\delta_{\mathscr{l},\mathscr{k}}\mathrm{d}\varpi_{\mathscr{k};  t}
		\\
		&\label{ost:Poisson2}
		\operatorname{E}\left(\mathrm{d}\varpi_{\mathscr{l};  t}\big{|}\varpi_{\mathscr{l};  t}\right)=\mathscr{r}_{\mathscr{l}; t}\mathrm{d}t
	\end{align}
\end{subequations}
In (\ref{ost:sde1}) $ \operatorname{A}_{t}\colon \mathbb{R}_{+}\mapsto\mathcal{M}_{d}$ is a time dependent matrix independent of $ \bm{\phi}_{t} $, $ \bm{\phi}_{t}^{\dagger} $ and the Poisson processes. We determine the explicit form of $  \operatorname{A}_{t}$ at the end of the calculation by requiring that 
\begin{align}
	\bm{\rho}_{t}=\operatorname{E}\lambda_{t}\bm{\phi}_{t}\bm{\phi}_{t}^{\dagger}
	\nonumber
\end{align}  
solves a master equation of the form (\ref{me:me}). To (\ref{ost:sde1}) and its dual adjoint we associate initial data $ \bm{z} $, $ \bm{\bar{z}} $ on the Bloch hyper-sphere whereas
{\begin{align}
	\lambda_{\ti}=1
	\nonumber
\end{align}
}
As always, It\^o lemma is our friend:
\begin{align}
\mathrm{d}(\lambda_{t}\bm{\phi}_{t}\bm{\phi}_{t}^{\dagger})&=
	\lambda_{t}(\operatorname{A}_{t}\bm{\phi}_{t}\bm{\phi}_{t}^{\dagger}+\bm{\phi}_{t}\bm{\phi}_{t}^{\dagger}\operatorname{A}^{\dagger})\mathrm{d}t+
	\sum_{\mathscr{l}=1}^{\mathscr{L}}\mathrm{d}\varpi_{\mathscr{l}; t}	\,\left(\frac{\mathscr{w}_{\mathscr{l}; t}}{\mathscr{r}_{\mathscr{l}; t}}-1\right)\,\lambda_{t}\,\bm{\phi}_{t}\bm{\phi}_{t}^{\dagger}
\nonumber\\	
&	+\sum_{\mathscr{l}=1}^{\mathscr{L}}	\frac{\mathscr{w}_{\mathscr{l}; t}}{\mathscr{r}_{\mathscr{l}; t}}\mathrm{d}\varpi_{\mathscr{l}; t}
	\left(\operatorname{L}_{\mathscr{l}; t}\lambda_{t} \bm{\phi}_{t}\bm{\phi}_{t}^{\dagger}\operatorname{L}_{\mathscr{l}; t}^{\dagger}-\lambda_{t}\bm{\phi}_{t}\bm{\phi}_{t}^{\dagger}
	\right)
	\nonumber
\end{align}
Therefore if we  choose
\begin{align}
	\operatorname{A}_{t}=-\imath\operatorname{H}_{t}-\sum_{\mathscr{l}=1}^{\mathscr{L}}
	\frac{\mathscr{w}_{\mathscr{l}; t}\,\operatorname{L}_{\mathscr{l}; t}^{\dagger}\operatorname{L}_{\mathscr{l}; t}-\mathscr{r}_{\mathscr{l}; t}\,\operatorname{1}_{d}}{2}
	\nonumber
\end{align}
we obtain
\begin{align}
	\mathrm{d}(\lambda_{t}\bm{\phi}_{t}\bm{\phi}_{t}^{\dagger})&=-\imath\,\left[\operatorname{H}\,,\lambda_{t}\bm{\psi}_{t}\bm{\psi}_{t}^{\dagger}\right]\,\mathrm{d}t
	\nonumber\\
&	+\sum_{\mathscr{l}=1}^{\mathscr{L}}\mathscr{w}_{\mathscr{l},t}\frac{\left[\operatorname{L}_{\mathscr{l}; t}\,, \lambda_{t}\bm{\phi}_{t}\bm{\phi}_{t}^{\dagger}\operatorname{L}_{\mathscr{l}; t}^{\dagger}\right]+\left[\operatorname{L}_{\mathscr{l}; t}\lambda_{t}\bm{\phi}_{t}\bm{\phi}_{t}^{\dagger}\,, \operatorname{L}_{\mathscr{l}; t}^{\dagger}\right]}{2}
	\nonumber\\
	&+\sum_{\mathscr{l}=1}^{\mathscr{L}}	\left(\mathrm{d}\varpi_{\mathscr{l}; t}-\mathscr{r}_{\mathscr{l}; t}\,\mathrm{d}t\right)
	\left(\frac{\mathscr{w}_{\mathscr{l}; t}}{\mathscr{r}_{\mathscr{l}; t}}\,\operatorname{L}_{\mathscr{l}; t} \,\lambda_{t} \bm{\phi}_{t}\bm{\phi}_{t}^{\dagger}\operatorname{L}_{\mathscr{l}; t}^{\dagger}-\lambda_{t}\bm{\phi}_{t}\bm{\phi}_{t}^{\dagger}\right )
	\label{ost:sLvN}
\end{align}
The sum on the last row runs over the innovation processes (mean compensated increments) of the increments of the Poisson processes.
Hence, upon taking the expectation value of (\ref{ost:sLvN}) we recover (\ref{me:me}).  We may also regard (\ref{ost:sLvN}) as an It\^o stochastic differential equation for the Markov process
\begin{align}
	\bm{\sigma}_{t}=\lambda_{t}\bm{\phi}_{t}\bm{\phi}_{t}^{\dagger}
	\label{ost:cb}
\end{align}
which is, however, trace preserving only on average. It is straightforward now to verify that the expectation value of (\ref{ost:cb}) {takes} the canonical form of a completely bounded linear map. Namely, (\ref{ost:sde1}) is exactly integrable in between jumps. If we introduce 
\begin{align}
&	\operatorname{K}_{t,s}=e^{\sum_{\mathscr{l}=1}^{\mathscr{L}}\int_{s}^{t}\mathrm{d}u\,\frac{\mathscr{r}_{\mathscr{l}; u}}{2}}\operatorname{G}_{t,s}
\nonumber\\
& \operatorname{G}_{t,s}=\mathscr{T}\exp\left(-\int_{s}^{t}\mathrm{d}u\left(\imath\,\operatorname{H}_{u}+\sum_{\mathscr{l}=1}^{\mathscr{L}}\mathscr{w}_{\mathscr{l}; u}{\frac{\operatorname{L}_{\mathscr{l}; u}^{\dagger}\operatorname{L}_{\mathscr{l}; u}}{2}}\right)\right)
\nonumber
\end{align}
where \textquotedblleft $  \mathscr{T}\exp$\textquotedblright\ is the time ordered exponential, then
\begin{align}
	\phi_{\tau_{i+1}^{-}}=\operatorname{K}_{\tau_{i+1}^{-},\tau_{i}^{+}}\phi_{\tau_{i}^{+}}
	\nonumber
\end{align}
having denoted by $ \tau_{i} $ the random time when the $ i $-th jump occurs and by $\phi_{\tau_{i}^{-}}$ ($\phi_{\tau_{i}^{+}}$) the state immediately before (after) the jump. The influence process (\ref{ost:sde2}) is a pure stochastic step function: if the jump 
is caused by the $ \mathscr{l} $-th Poisson process we get
\begin{align}
	\lambda_{\tau_{i}^{+}}=\frac{\mathscr{w}_{\mathscr{l}; \tau_{i}}}{\mathscr{r}_{\mathscr{l}; \tau_{i}}}\lambda_{\tau_{i}^{-}}
	\nonumber
\end{align}
Thus, the change of (\ref{ost:cb}) in a finite time interval including exactly one jump of type $ \mathscr{l}   $ is
\begin{align}
	\sigma_{\tau_{i+1}^{-}}=\frac{\mathscr{w}_{\mathscr{l}; \tau_{i}}}{\mathscr{r}_{\mathscr{l}; \tau_{i}}}
	e^{\sum_{\mathscr{l}=1}^{\mathscr{L}}\int_{\tau_{i}}^{\tau_{i+1}}\mathrm{d}u\,\mathscr{r}_{\mathscr{l}; u}}\operatorname{G}_{\tau_{i+1},\tau_{i}}\operatorname{L}_{\mathscr{l},\tau_{i}}\sigma_{\tau_{i}^{-}}\operatorname{L}_{\mathscr{l},\tau_{i}}^{\dagger}\operatorname{G}_{\tau_{i+1},\tau_{i}}^{\dagger}
	\nonumber
\end{align}
In writing this expression we assumed continuity of canonical rates and couplings. We recognize that the exponential prefactor is exactly the inverse of the probability that no jump occurs in  $\big{[}\tau_{i}^{+}\,,\tau_{i+1}^{-}\big{]}  $: {when evaluating the expectation value it} cancels  with the probability that the trajectory is continuous in between jumps. Hence, in order to compute the expectation value of (\ref{ost:cb}) conditional upon the event that \emph{exactly} one jump occurs  in the interval $ \left[\ti,t\right] $, we only need to recall expression of the 
jump rate (\ref{ost:Poisson2}). The result is   
\begin{align}
	\operatorname{E}(\bm{\sigma}_{t}\big{|}\omega)=\sum_{\mathscr{l}=1}^{\mathscr{L}}\int_{\ti}^{t}\mathrm{d}u\,\mathscr{w}_{\mathscr{l}; u}
	\operatorname{G}_{t,u}\operatorname{L}_{\mathscr{l},u}\operatorname{G}_{u,\ti}\bm{z}\bm{z}^{\dagger}\operatorname{G}_{u,\ti}^{\dagger}\operatorname{L}_{\mathscr{l},u}^{\dagger}\operatorname{G}_{t,u}^{\dagger}
	\nonumber
\end{align}
As to be expected the result is independent of the $\mathscr{r}_{\mathscr{l}; \tau_{i}} $'s.
Inspection of the result shows that the integral can always be couched into the form of a difference between {two} completely positive contributions if one of the canonical couplings $ \mathscr{w}_{\mathscr{l}; \cdot} $ changes sign in $  \left[\ti,t\right]  $. The same considerations can be straightforwardly extended at the price of a somewhat more cumbersome notation to prove that the unconditional expectation value of $ \bm{\sigma}_{t} $ yields a completely bounded map \cite{DoMG2022}. 

\section{Application to a two-level system in a boson environment}
\label{sec:tls}

Gaussian environments linearly coupled to an open system are important mathematical models for multiple reasons. The influence functional \cite{FeVe1963} defined by the exact evaluation of the partial trace constitutes a paradigm of an effective potential felt by an open system. 
The paradigm is also directly descriptive of realistic situations when the correlation time of {the environment} is much faster than the system dynamics \cite{CaLe1983}. 
It is therefore interesting to compare the unraveling generated by the Markov process associated with the influence martingale with that generated by the random process 
 derived for systems in contact with a Gaussian environment  \cite{StrW1996,DiSt1997, DiGiSt1998, StGr2002,StoJ2004,DiFe2014, TilA2017}. For this purpose, we consider here a two level system in a boson environment drawing from \cite{SmVa2010}. There, Smirne and Vacchini derived and analyzed the description of the dynamics given by the master equation and by the Nakajima-Zwanzig memory kernel. Depending upon the number of bosons in the environment the expression of the master equation can be derived exactly or within fourth order time-convolutionless perturbation theory \cite{NeBrWe2021} accuracy. 

In the simplest set-up the environment consists of a single boson either in the vacuum or in the thermal state at the instant of time $ \ti=0 $ when the system environment state operator is assumed in tensor product form. We follow \cite{SmVa2010} and adopt the convention that the first element of the canonical basis of $ \mathbb{C}^{2} $ corresponds to the excited state of the two level states. 
The unitary evolution of the microscopic model is generated by the Hamiltonian operator
\begin{align}
	\operatorname{H}=\omega_{\mathfrak{0}}\,\sigma_{+}\sigma_{-}\,\otimes\,\operatorname{1}_{\mathcal{H}_{E}}+\operatorname{1}_{2}\,\otimes\,(\omega_{\mathfrak{0}}+\Delta)\operatorname{b}^{\dagger}\operatorname{b}+g\left(\sigma_{+}\,\otimes\,\operatorname{b}+\sigma_{-}\,\otimes\,\operatorname{b}^{\dagger}\right)
	\label{tls:H}
\end{align}
where $ \Delta $ is detuning, $ \operatorname{b}\,,\operatorname{b}^{\dagger} $ are the ladder operators in the Hilbert space $ \mathcal{H}_{E} $
of the boson and
\begin{align}
	\sigma_{\pm}=\frac{\sigma_{1}{\pm}\imath\,\sigma_{2}}{2}
	\nonumber
\end{align}
where $ \sigma_{i} $ for $ i=1,2,3 $ are the Pauli matrices. 

The family of matrices in $ \mathcal{M}_{4} $ describing the evolution of the {reshaped} system state operator in the interaction picture has the form
\begin{align}
	\mathsf{F}_{t,0}=\begin{bmatrix}
	\beta_{t} & 0 & 0 & 1-\alpha_{t} \\   0 & \gamma_{t} & 0 & 0\\   0 & 0 &\bar{\gamma}_{t} & 0 \\ \alpha_{t} & 0 & 0 & 1-\beta_{t}
	\end{bmatrix}
	\label{tls:flow}
\end{align} 
The real scalar functions $ \alpha_{t},\beta_{t}\colon\mathbb{R}_{+}\mapsto\mathbb{R} $ and the complex $\gamma_{t}\colon\mathbb{R}_{+}\mapsto\mathbb{C} $ are specified by averages over the initial state operator of environment of certain operators in operators in the Hilbert space of the boson.  In general, any dynamical map obtained by partial trace over unitary evolution from an initial state in tensor product form enjoys complete positivity \cite{BrPe2002}. We determine the conditions that (\ref{tls:flow}) satisfies to ensure complete positivity. The reshuffling of the flow matrix yields
\begin{align}
		\mathsf{F}_{t,0}^{(\mathsf{R})}=\begin{bmatrix}
		\beta_{t} & 0 &  0 & \gamma_{t}\\   0 & 1-\alpha_{t} & 0 & 0\\   0 & 0 &\alpha_{t} & 0\\ \bar{\gamma}_{t} & 0  & 0 & 1-\beta_{t}
	\end{bmatrix}
	\nonumber
\end{align}
The reshuffled matrix is readily self-adjoint with eigenvalues
 \begin{align}
 &	\operatorname{Sp}\mathsf{F}_{t,0}^{(\mathsf{R})}=\left\{f_{1; t},f_{4; t},f_{3; t},f_{4; t}\right\}
 \nonumber\\
&= \left\{ 1-\alpha_{t}\,,1-\beta_{t}\,,\frac{\alpha_{t}+\beta_{t}}{2}+\sqrt{\frac{(\alpha_{t}-\beta_{t})^{2}}{4}+|\gamma_{t}|^{2}}\,, \frac{\alpha_{t}+\beta_{t}}{2}-\sqrt{\frac{(\alpha_{t}-\beta_{t})^{2}}{4}+|\gamma_{t}|^{2}}\right\}
 	\nonumber
 \end{align}
Complete positivity thus requires
\begin{align}
	0\,\leq\,\alpha_{t}\,,\beta_{t}\,\leq\, 1 \hspace{1.0cm}\&\hspace{1.0cm}\alpha_{t}\,\beta_{t}\,\geq\,|\gamma_{t}|^{2}
	\label{tls:cp}
\end{align}

\subsection{Derivation of the completely bounded master equation}

In order to reconstruct the master equation from the {reshaped} dynamical map we observe that
\begin{align}
	{\operatorname{res}}(\bm{\rho}_{t+\mathrm{d}t})=\left(\operatorname{1}_{4}+\dot{\mathsf{F}}_{t}\mathsf{F}_{t}^{-1}\,\mathrm{d}t\right){\operatorname{res}}(\bm{\rho}_{t})
	\nonumber
\end{align}
{It is} sufficient to compute eigenvalues and eigenvectors up to the order $ \mathrm{d}t $ of the reshuffled matrix
\begin{align}
&	\left(\operatorname{1}_{4}+\dot{\mathsf{F}}_{t}\mathsf{F}_{t}^{-1}\,\mathrm{d}t\right)^{\mathsf{R}}=
\nonumber\\
&	\begin{bmatrix}
	1-\mathrm{d}t\frac{\dot{\alpha}_{t}(\beta_{t}-1)-\alpha_{t}\dot{\beta}_{t}}{\alpha_{t}+\beta_{t}-1}	& 0 & 0 &
	1+\mathrm{d}t\left(\operatorname{Re}\frac{\dot{\gamma}_{t}}{\gamma_{t}}+\imath\operatorname{Im}\frac{\dot{\gamma}_{t}}{\gamma_{t}}\right) \\ 
	 0 & \frac{\dot{\beta}_{t}(\alpha_{t}-1)-\beta_{t}\dot{\alpha}_{t}}{\alpha_{t}+\beta_{t}-1}	\mathrm{d}t &0 &0
	 \\
	 0 & 0 & \frac{\dot{\alpha}_{t}(\beta_{t}-1)-\alpha_{t}\dot{\beta}_{t}}{\alpha_{t}+\beta_{t}-1}\mathrm{d}t & 0
	 \\
	 1+\mathrm{d}t\left(\operatorname{Re}\frac{\dot{\gamma}_{t}}{\gamma_{t}}-\imath\operatorname{Im}\frac{\dot{\gamma}_{t}}{\gamma_{t}}\right) & 0 & 0 & 1- \frac{\dot{\beta}_{t}(\alpha_{t}-1)-\beta_{t}\dot{\alpha}_{t}}{\alpha_{t}+\beta_{t}-1}	\mathrm{d}t 
	\end{bmatrix}
	\nonumber
\end{align}
Straightforward algebra then yields
\begin{align}
&	\bm{\dot{\rho}}_{t}=	\frac{\dot{\alpha}_{t}(\beta_{t}-1)-\alpha_{t}\dot{\beta}_{t}}{\alpha_{t}+\beta_{t}-1}		\left(\sigma_{-}\bm{\rho}_{t}\sigma_{+}-\frac{\sigma_{+}\sigma_{-}\bm{\rho}_{t}+\bm{\rho}_{t}\sigma_{+}\sigma_{-}}{2}\right)
\nonumber\\
&+\frac{\dot{\beta}_{t}(\alpha_{t}-1)-\beta_{t}\dot{\alpha}_{t}}{\alpha_{t}+\beta_{t}-1}	\,\left (\sigma_{+}\bm{\rho}_{t}\sigma_{-}-\frac{\sigma_{-}\sigma_{+}\bm{\rho}_{t}+\bm{\rho}_{t}\sigma_{-}\sigma_{+}}{2}\right )
\nonumber\\
&+\frac{1}{4}\left(\frac{\dot{\alpha}_{t}+\dot{\beta}_{t}}{\alpha_{t}+\beta_{t}-1}	-2\operatorname{Re}\frac{\dot{\gamma}_{t}}{\gamma_{t}}\right)\left(\sigma_{3}\bm{\rho}_{t}\sigma_{3}-\bm{\rho}_{t}\right)
+\imath\,\operatorname{Im}\left(\frac{\dot{\gamma}_{t}}{\gamma_{t}}\right)\,\left[\sigma_{+}\sigma_{-}\,,\bm{\rho}_{t}\right]
	\nonumber
\end{align}
If the environment is initially in the vacuum state we get \cite{SmVa2010}
\begin{align}
	\alpha_{t}=1 \hspace{1.0cm} \&  \hspace{1.0cm}\beta_{t}=|\gamma_{t}|^{2}
	\nonumber
\end{align}
with
\begin{align}
 \gamma_{t}=e^{\imath\,\Delta\,t}\left(\cos \frac{t\,\sqrt{\Delta^{2}+2\,g^{2}}}{2}-\imath\frac{\Delta}{\sqrt{\Delta^{2}+4\,g^{2}}}\sin\frac{t\,\sqrt{\Delta^{2}+4\,g^{2}}}{2}\right)
	\nonumber
\end{align}
The explicit form of the master equation becomes
\begin{align}
	\bm{\dot{\rho}}_{t}&=	\frac{2\,g^{2}\,\sqrt{\Delta^{2}+4\,g^{2}}\sin(t\,\sqrt{\Delta^{2}+4\,g^{2}})}{\Delta^{2}+2\,g^{2}\,\left(1+\cos(t\,\sqrt{\Delta^{2}+4\,g^{2}}\right)}\left(\sigma_{-}\bm{\rho}_{t}\sigma_{+}-\frac{\sigma_{+}\sigma_{-}\bm{\rho}_{t}+\bm{\rho}_{t}\sigma_{+}\sigma_{-}}{2}\right)
	\nonumber\\
&	+\imath\,\Delta\frac{\Delta^{2}+2\,g^{2}\,\cos(t\,\sqrt{\Delta^{2}+4\,g^{2}})}{2\,\Delta^{2}+4\,g^{2}\,\left(1+\cos(t\,\sqrt{\Delta^{2}+4\,g^{2}}\right)}\left[\sigma_{+}\sigma_{-}\,,\bm{\rho}_{t}\right]
	\label{eq:master2level}
\end{align}
We see that the above master equation is of the form (\ref{me:me}) because the canonical coupling
\begin{align}
	\mathscr{w}_{t}=\frac{2\,g^{2}\,\sqrt{\Delta^{2}+4\,g^{2}}\sin(t\,\sqrt{\Delta^{2}+4\,g^{2}})}{\Delta^{2}+2\,g^{2}\,\left(1+\cos(t\,\sqrt{\Delta^{2}+4\,g^{2}}\right)}
	\nonumber
\end{align}
changes sign at
\begin{align}
	t_{n}=\frac{\pi+2\,\pi\,n}{\sqrt{\Delta^{2}+4\,g^{2}}}\hspace{1.0cm}\forall\,n\in \mathbb{N}
	\label{tls:divt}
\end{align}

\subsubsection{Analysis at vanishing detuning}

At zero detuning, $ \Delta=0 $, the coupling of the master equation changes sign by passing through infinity:
\begin{align}
	\mathscr{w}_{t}|_{\Delta=0}=2\,g\,\tan(g\,t)
	\nonumber
\end{align}
As it is already evident from inspection of (\ref{tls:flow}), the evolution of the state operator remains well defined:
\begin{align}
	\bm{\rho}_{t}=\frac{\operatorname{1}_{2}-\sigma_{3}}{2}+\cos(g\,t)\,\frac{(x_{1} \sigma_{1}+x_{2} \sigma_{2})+x_{3}\,\cos(g\,t)\sigma_{3}}{2}
	\nonumber
\end{align}
where $ x_{1},x_{2},x_{3}\in\mathbb{R} $ specify the initial data. Singularities of the rate occurs when the state operator reduces to a projector on the ground state. At first glance, the divergence  seems to be a problem for the unraveling. We notice, however, that if we remove the drift in (\ref{ost:sde1}) by setting
\begin{align}
	\bm{\phi}_{t}=e^{\int_{0}^{t}\mathrm{d}s \,\frac{\mathscr{r}_{s}}{2}}\frac{(\cos(g\,t)+1)\operatorname{1}_{2}+(\cos(g\,t)-1)\sigma_{3}}{2}\bm{\xi}_{t}
	\nonumber
\end{align}
then the stochastic process $ \bm{\xi}_{t} $  satisfies the It\^o stochastic jump equation
\begin{align}
\mathrm{d}\bm{\xi}_{t}&=\mathrm{d}\varpi_{t}\left (\frac{(\cos(g\,t)+1)\operatorname{1}_{2}+(1-\cos(g\,t))\sigma_{3}}{2}\sigma_{-}\bm{\xi}_{t}-\bm{\xi}_{t}\right)
\nonumber\\
&	=\mathrm{d}\varpi_{t}\left(\begin{bmatrix}
	0 & 0	\\   \cos(g\,t) & 0
	\end{bmatrix}
-\operatorname{1}_{2}\right)\bm{\xi}_{t}
	\nonumber
\end{align}
whose paths are suppressed when a jump exactly occurs at the times (\ref{tls:divt}) when the canonical rate diverges
\begin{align}
	\forall\bm{\xi}_{t_{n}^{-}}\,\Longrightarrow\, \bm{\xi}_{t_{n}^{+}}=0
	\nonumber
\end{align}
The analysis in this simple case supports the possibility to apply the unraveling even when canonical couplings diverge on a countable set of times.

\subsection{Numerical example}
We numerically integrate the master equation \eqref{eq:master2level} using the non-linear stochastic Schr\"odinger equation and the influence martingale as in section~\ref{sec:cb}. Implementing Bloch hyper-sphere preserving dynamics is numerically convenient in generic situations. It is immediately clear that the single decoherence operator $ \sigma_{-} $ in \eqref{eq:master2level} does not satisfy (\ref{cb:povm1}). As proven in \cite{DoMG2023}, a situation of this type can be easily mended by extending the sums  (\ref{LGKS:sse1}),  (\ref{LGKS:sse1}) to an extra decoherence channel with the following requirements. The addition of the extra decoherence operator to the sum in (\ref{cb:povm1}) must recover an operator proportional to the identity.   The rate of the extra counting process must be exactly equal to $\mathscr{c}_{t}$. In this way by \eqref{cb:unraveling} the decoherence coupling of the extra channel to the master equation equals zero. In the present case, we choose the additional decoherence operator to be $\sigma_{+}$.

Figures \ref{fig:JC}, \ref{fig:JC_2} (a) and (b) show the results of the numerical integration of the influence martingale for 10000 and 20000 trajectories for two different initial states. The coloured (full) lines show the trace of the state and expectation values normalised by the trace. The results of the influence martingale are compared with the direct integration of the master equation, which is displayed by the dashed lines. We observe that doubling the amount of simulated trajectories improves the estimation of the solution to the master equation.

Numerically integration of master equations using the influence martingale is currently being implemented in Qutip version 5 \cite{JoNaNo2013,Qutip}
\begin{figure}
    \centering
    \includegraphics[scale=0.8]{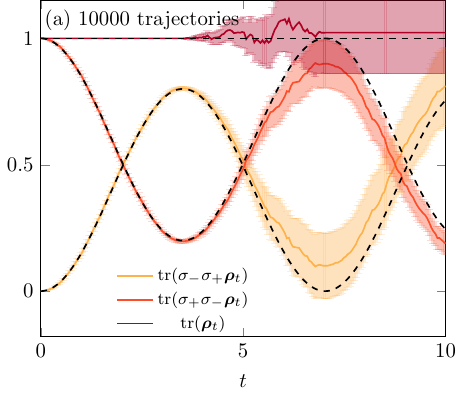}
    \includegraphics[scale=0.8]{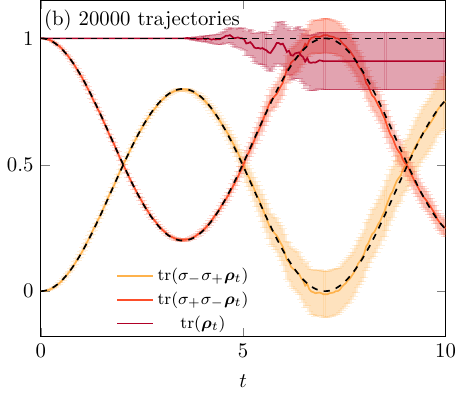}
    \caption{Numerical integration of master equation \eqref{eq:master2level} using the influence martingale. The full coloured lines show the estimations from the influence martingale and the shaded regions show twice the estimated standard deviation divided by the square root of the amount of trajectories. The dashed black lines show the results of directly integrating the master equation. The system parameters are $\Delta=0.4$ and $g=0.4$. The system initial state equals $|-\rangle$ with $\sigma_z |-\rangle=-|-\rangle$.}
    \label{fig:JC}
\end{figure}

\begin{figure}
    \centering
    \includegraphics[scale=0.8]{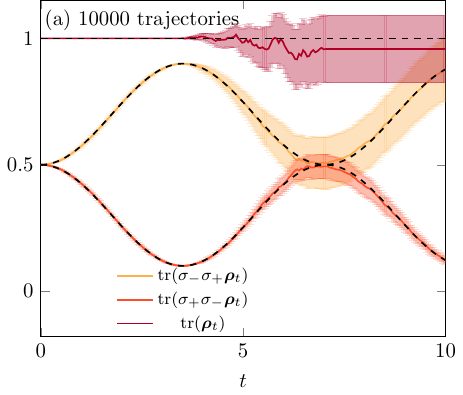}
    \includegraphics[scale=0.8]{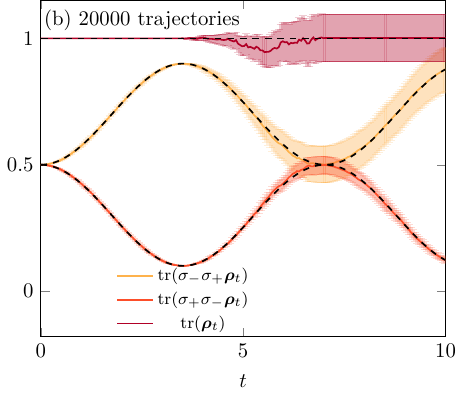}
    \caption{Numerical integration of master equation \eqref{eq:master2level} using the influence martingale. The full coloured lines show the estimations from the influence martingale and the shaded regions show twice the estimated standard deviation divided by the square root of the amount of trajectories. The dashed black lines show the results of directly integrating the master equation. The system parameters are $\Delta=0.4$ and $g=0.4$. The system initial state equals $\frac{|-\rangle+ |+\rangle}{\sqrt{2}}$ with $\sigma_z |\pm\rangle=\pm|\pm\rangle$.}
    \label{fig:JC_2}
\end{figure}

\subsubsection{Remark on the exponential growth of the influence martingale}

Upon decomposing the influence martingale into compensator and pure jump parts \cite{KleF2005} we can always write
\begin{align}
	\label{eq:sigma}
	\mu_t = \exp\left(\mathscr{g}\int_0^t \mathrm{d}s \, \mathscr{c}_{s}\right)\gamma_t
\end{align}
with
\begin{align}
	\mathrm{d}\gamma_t =\gamma_{t} \sum_{\mathscr{l}=1}^{d^{2}-1}\left(\frac{\mathscr{w}_{\mathscr{l};t}}{\mathscr{r}_{\mathscr{l};t}}-1\right)\,\mathrm{d}\nu_{\mathscr{l};t}.
\end{align}
We readily see that the presence of at least one negative coupling brings about a deterministic exponential growth of the martingale. Yet, the presence of this exponential prefactor is a consequence of our choice of representation of the unraveling rather than an a property of the master equation. In fact, we can take advantage of the chain of identities 
\begin{align}
	\bm{\rho}_{t}=\operatorname{E}(\mu_{t}\bm{\psi}_{t}\bm{\psi}_{t}^{\dagger})=\frac{\operatorname{E}(\mu_{t}\bm{\psi}_{t}\bm{\psi}_{t}^{\dagger})}{\operatorname{E}(\mu_{t})}=\frac{\operatorname{E}(\gamma_{t}\bm{\psi}_{t}\bm{\psi}_{t}^{\dagger})}{\operatorname{E}(\gamma_{t})}
	\nonumber
\end{align}
and use the rightmost expression in numerical estimates of the state operator. Thus, the exponential growth of $\mu_t$ does not pose a problem.

\section{Di\'osi-Gisin-Strunz random (ostensible) state vector equation}
\label{sec:DGS}

In a series of remarkable papers  Strunz \cite{StrW1996} in collaboration with Di\'osi and Gisin \cite{DiSt1997,DiGiSt1998} 
showed that the state operator of an open system in contact with a Gaussian environment admits the representation as the expectation 
value over realizations of ostensible pure states solution of a random differential equation driven by time correlated complex-valued Gaussian processes. In \cite{DiGiSt1998} via Girsanov's change of measure method they also derived of a non-linear norm preserving dynamics on the Bloch hyper-sphere. The derivation of Gaussian random differential equation has been later refined in \cite{DiFe2014,LiSt2017} and \cite{StGr2002,StoJ2004,TilA2017} where some essential improvement have been introduced that greatly simplify actual applications of the method.

We analyze here the ostensible linear dynamics, which is the backbone of the method. As our purpose is comparison with the influence martingale approach, we focus here on the simplest possible example which captures the general features of the method. We refer to \cite{StGr2002,StoJ2004,LiSt2017} and especially \cite{TilA2017} for the general case.

\subsection{Influence functional of a simple boson environment}

We consider an arbitrary system on a $ d $ dimensional Hilbert space coupled by a decoherence operator $ \operatorname{L} $ and its adjoint to the normal modes of the {boson} environment  
\begin{align}
	\operatorname{H}_{SE}=\operatorname{H}\,\otimes\,\operatorname{1}_{E}+\sum_{n=1}^{\mathscr{N}} (g_{n}\operatorname{L}\,\otimes\,\operatorname{a}_{n}^{\dagger}
	+\bar{g}_{n}\operatorname{L}^{\dagger}\,\otimes\,\operatorname{a}_{n})+\operatorname{1}_{S}\,\otimes\,\sum_{n=1}^{\mathscr{N}}
	\omega_{n}\operatorname{a}_{n}^{\dagger}\operatorname{a}_{n}
	\label{DGS:micro}
\end{align}
In what follows we make the simplifying but not too restrictive assumption that at $ t=0 $ the state operator of the bipartite system-environment {is the} tensor product of the system state operator and of the state operators of the individual normal modes of the environment.

The hypothesis of linear coupling has the consequence that in the path integral representation system operators act as external sources. 
In the holomorphic (also referred to as Keldysh's) representation \cite{ZiJu2010,KamA2011} the influence functional 
\cite{FeVe1963}\footnote{We refer to \cite{SuChYuCh1988} for the proof of the equivalence of the holomorphic influence functional and its the position-momentum representation first introduced in \cite{FeVe1963}. We thank Erik Aurell for commenting on this point.} in a time interval $ [0,\tf] $ corresponds to the path integral 
\begin{align}
	\mathcal{Z}(\bm{\mathscr{j}},\bm{\bar{\mathscr{j}}})=\int_{\mathfrak{P}}\prod_{n=1}^{\mathscr{N}}\operatorname{D}[\bm{\mathscr{a}}_{n},\bm{\bar{\mathscr{a}}}_{n}]\, e^{
	\mathcal{A}_{n; \tf}(\bm{\mathscr{a}}_{n},\bm{\bar{\mathscr{a}}}_{n},{\bm{\mathscr{j}},\bm{\bar{\mathscr{j}}}})}
	\label{DGS:pi}
\end{align} 
The action functional weighing each of the normal modes is
\begin{align}
{	\mathcal{A}_{n; \tf}(\bm{\mathscr{a}},\bm{\bar{\mathscr{a}}},\bm{\mathscr{j}},\bm{\bar{\mathscr{j}}})
=\imath\,\int_{0}^{\tf}\mathrm{d}t\,\Big{(}\bm{\bar{\mathscr{a}}}_{t}^{\top}
\sigma_{3}(\imath\,\partial_{t}-\omega_{n})\bm{\mathscr{a}}_{t}
-\bar{g}_{n}\bm{\bar{\mathscr{j}}}_{t}^{\top}\sigma_{3}\bm{\mathscr{a}}_{t}-g_{n}\bm{\bar{\mathscr{a}}}_{t}^{\top}\sigma_{3}\bm{\mathscr{j}}_{t})\Big{)}
}
	\nonumber
\end{align}
where the coordinates of an individual normal mode and source terms associated to the decoherence operators are $ \mathbb{C}^{2} $-valued vectors
\begin{align}
	\bm{\mathscr{a}}_{t}
	=\begin{bmatrix}
	\mathscr{a}_{\mathfrak{l}; t}	\\  \mathscr{a}_{\mathfrak{r}; t}
	\end{bmatrix}
, \hspace{0.3cm}
	\bm{\bar{\mathscr{a}}}_{t}
	=\begin{bmatrix}
		\mathscr{\bar{a}}_{\mathfrak{l}; t}	\\  \mathscr{\bar{a}}_{\mathfrak{r}; t}
	\end{bmatrix}
	, \hspace{0.3cm}
	\bm{\mathscr{j}}_{ t}
	=\begin{bmatrix}
	\mathscr{j}_{\mathfrak{l}; t}	\\  \mathscr{j}_{\mathfrak{r}; t}
\end{bmatrix}	, \hspace{0.3cm}
\bm{\bar{\mathscr{j}}}_{ t}
=\begin{bmatrix}
\mathscr{\bar{j}}_{\mathfrak{l}; t}	\\  \mathscr{\bar{j}}_{\mathfrak{r}; t}
\end{bmatrix}
	\nonumber
\end{align}
and $ \sigma_{i} $, $ i=1,2,3 $ here and below are the Pauli matrices.
Physically,  the vectors $ \bm{\mathscr{a}}_{t} $, $ \bm{\bar{\mathscr{a}}}_{t} $ correspond to  ladder operators of a normal mode: the $ \mathfrak{l} $-labeled ($ \mathfrak{r} $-labeled) component corresponds to an operator acting from the left (right) of the state operator.  Mathematically, $ \bm{\mathscr{a}}_{t}\,, \bm{\bar{\mathscr{a}}}_{t}$  are independent integration variables in $ \mathbb{C}^{2} $
(see e.g. discussion in section~6.1.1 of \cite{ZiJu2010}). To emphasize this point and later use, we denote index vector contraction with the algebraic transposition symbol ``$ \top $''.

In (\ref{DGS:pi}), $ \mathfrak{P} $ means that the path-space is restricted by the boundary conditions
\begin{align}
\mathfrak{P}=\cup_{n=1}^{\mathscr{N}} \left\{ \big{\{}\bm{\mathscr{a}}_{n; t},\bm{\bar{\mathscr{a}}}_{n; t}\big{\}}_{t\in [0,\tf]} \,\big{|} \,
\mathsf{A}_{n; \iota}\bm{\mathscr{a}}_{n; 0}+ \mathsf{A}_{n;\mathscr{f}}\bm{\mathscr{a}}_{n; \tf} =0\right\}
\nonumber
\end{align}
which are most conveniently imposed by introducing a pair of matrices specifying the linear subspace in $ \mathbb{C}^{2} $ transverse to the boundary conditions. In the case each of the normal modes is initially in a thermal state at temperature $ \beta^{-1} $ we get
\begin{align}
	\label{DGS:Bott}
	\mathsf{A}_{n; \iota}=\begin{bmatrix}
		0  & 0	\\  1 & -\,e^{-\beta\,\omega_{n}}
	\end{bmatrix}
	\hspace{1.0cm}\&\hspace{1.0cm}
	\mathsf{A}_{n; \mathscr{f}}=\begin{bmatrix}
		1 & -1	\\ 0 & 0 
	\end{bmatrix}		
\end{align}
Strictly speaking (\ref{DGS:pi}) should be regarded as a formal expression whose precise definition is given by the limit of finite dimensional approximations on a time lattice. We refer to chapter~2 of \cite{KamA2011} for a detailed discussion of the mathematical subtleties that may arise. Upon evaluating the Gaussian path integrals and applying functional determinant theory   (see e.g. \cite{PMG2003}) arrive at the explicit expression of the influence functional
{\begin{align}
	\mathcal{Z}(\bm{\mathscr{j}},\bm{\bar{\mathscr{j}}})=
	\prod_{n=1}^{\mathscr{N}} 
	\frac{\exp\left(-\imath g_{n}^{2}\int_{0}^{\tf}\mathrm{d}t\,\int_{0}^{\tf}\mathrm{d}s\,\bm{\bar{\mathscr{j}}}_{t}^{\top}\sigma_{3}
		\bm{\mathscr{G}}_{n; t,s}\sigma_{3}\bm{\mathscr{j}}_{ s}\right)}{\det\left(\mathsf{A}_{n;\mathscr{f}}\bm{\mathscr{F}}_{n; \tf,0}+\mathsf{A}_{n; \iota}\right)} 
	\label{DGS:if}
\end{align}
}
with $ \bm{\mathscr{G}}_{n; t,s}$ the Green function solution of
\begin{subequations}
	\label{DGS:Green}
	\begin{align}
		&\label{DGS:Green1}
		\sigma_{3}\,(\imath\,\partial_{t}-\omega_{n})\bm{\mathscr{G}}_{n; t,s}=\delta_{t-s}\operatorname{1}_{2}&
		\\
		&\label{DGS:Green2}
		\mathsf{A}_{n; \mathscr{f}}\bm{\mathscr{G}}_{n; \tf,s}+\mathsf{A}_{n; \iota}\bm{\mathscr{G}}_{n; 0,s}=0&\forall\, s \in \left(0,\tf\right)
	\end{align}
\end{subequations}
and $ \bm{\mathscr{F}}_{n; t,s}$ solving
\begin{subequations}
	\label{DGS:flow}
	\begin{align}
		&\label{DGS:flow1}
		(\partial_{t}-\imath\,\omega_{n})\bm{\mathscr{F}}_{n; t,s}=0&
		\\
		&\label{DGS:flow2}
		\bm{\mathscr{F}}_{n; t\,s}=\operatorname{1}_{2}&\forall\, s \in \left(0,t\right)
	\end{align}
\end{subequations}
 The determinant {in the denominator of} (\ref{DGS:if}) is a pure function of time which we can reabsorb in the normalization of the path integral over system degrees of freedom. Thus, all the relevant information about the system environment interaction is subsumed in the kernel \cite{FeVe1963}
\begin{align}
\bm{\mathscr{K}}_{t,s}:=-\imath\,\sum_{n=1}^{\mathscr{N}}|g_{n}|^{2}
	\sigma_{3}\bm{\mathscr{G}}_{n; t,s}\sigma_{3}=
	{	
	\begin{bmatrix}
	\mathscr{K}_{ t,s}^{(1,1)} & \mathscr{K}_{ t,s}^{(1,2)} 	\\  \mathscr{K}_{ t,s}^{(2,1)}  & \mathscr{K}_{ t,s}^{(2,2)}
	\end{bmatrix}
}
	\label{DGS:quad}
\end{align}
Once we insert the explicit solution of (\ref{DGS:Green}), we get
\begin{align}
	\bm{\mathscr{K}}_{t,s}=
		\begin{bmatrix}
			-\mathds{1}_{t-s}^{(0)}\,f_{t,s}^{(1)}-\mathds{1}_{s-t}^{(1)}\,f_{t,s}^{(2)} & f_{t,s}^{(2)} 
			\\ 
			f_{t,s}^{(1)} & -\mathds{1}_{t-s}^{(1)}\,f_{t,s}^{(2)}-\mathds{1}_{s-t}^{(0)}\,f_{t,s}^{(1)}
			\end{bmatrix}
	\label{DGS:covariance1}
\end{align}
with
\begin{align}
	\label{DGS:covfun}
	\begin{array}{ll}
	f_{t,s}^{(1)}=\sum_{n=1}^{\mathscr{N}}e^{-\imath\,\omega_{n}\,(t-s)}\dfrac{g_{n}^{2} }{1-e^{-\beta\omega_{n}}}=\bar{f}_{s,t}^{(1)}=f_{t-s,0}^{(1)}
		\\[0.3cm]  
	f_{t,s}^{(2)}=\sum_{n=1}^{\mathscr{N}}e^{-\imath\,\omega_{n}\,(t-s)}\dfrac{g_{n}^{2}	e^{-\beta\,\omega_{n}}}{1-e^{-\beta\omega_{n}}}=\bar{f}_{s,t}^{(2)}=f_{t-s,0}^{(2)}
	\end{array}
\end{align}
The value of the Heaviside functions $ \mathds{1}_{t}^{(x)} $for zero argument $ \mathds{1}_{0}^{(x)}=x $  is fixed  to the identity by evaluating the path integral from finite dimensional approximations \cite{KamA2011}. 

\subsection{Construction of effective complex Gaussian processes}

We now would like to relate (\ref{DGS:covariance1}) to the covariance of zero mean complex-valued Gaussian processes.  To this goal, it expedient
to write the sources as
\begin{align}
	\begin{array}{ll}
	\mathscr{j}_{\mathfrak{i}; t}=\dfrac{\mathscr{r}_{\mathfrak{i}; t}^{(1)}+\imath\,\mathscr{r}_{\mathfrak{i}; t}^{(2)}}{\sqrt{2}}
\\[0.3cm]
\bar{\mathscr{j}}_{\mathfrak{i}; t}=\dfrac{\mathscr{r}_{\mathfrak{i}; t}^{(1)}-\imath\,{\mathscr{r}_{\mathfrak{i}; t}^{(2)}}}{\sqrt{2}}
	\end{array}
\hspace{1.0cm}\mathfrak{i}=\mathfrak{l},\mathfrak{r}
	\nonumber
\end{align} 
The representation reflects the fact that any operator can be decomposed in two self-adjoint components. Correspondingly, we introduce the vectors
\begin{align}
	\bm{\mathscr{r}}_{t}^{(i)}=\begin{bmatrix}
	\mathscr{r}_{\mathfrak{l}; t}^{(i)}	\\  \mathscr{r}_{\mathfrak{r}; t}^{(i)}	
	\end{bmatrix} \in \mathbb{R}^{2},
\hspace{0.5cm}
i=1,2
\hspace{1.0cm}
{\bm{\mathscr{s}}_{t}=\begin{bmatrix}
	\bm{\mathscr{r}}_{t}^{(1)}	\\  \bm{\mathscr{r}}_{t}^{(2)}
\end{bmatrix}}
\in \mathbb{R}^{4}
	\nonumber
\end{align}
We can now write the quadratic form in the influence functional as the sum of three contributions:
\begin{align}
&	\int_{0}^{\tf}\mathrm{d}t\,\int_{0}^{\tf}\mathrm{d}s\,{\bm{\bar{\mathscr{j}}}_{t}^{\top} \bm{\mathscr{K}}_{t,s}\bm{\mathscr{j}}_{ s}}
	=
\nonumber\\	
&	\sum_{i=1}^{2}\int_{0}^{\tf}\mathrm{d}t\,\int_{0}^{\tf}\mathrm{d}s\,\bm{\mathscr{r}}_{t}^{(i)\top} \bm{\mathscr{K}}_{t,s}^{\mathsf{S}}\bm{\mathscr{r}}_{ s}^{(i)}
	+\int_{0}^{\tf}\mathrm{d}t\,\int_{0}^{\tf}\mathrm{d}s\,\bm{\mathscr{s}}_{t}^{\top} \bm{\mathscr{I}}_{t,s}\bm{\mathscr{s}}_{ s}
	\label{DGS:effective}
\end{align}
The first two integrals specify symmetric quadratic forms involving each a distinct two component vector. Both quadratic forms  depend upon the symmetrized kernel 
\begin{align}
	\bm{\mathscr{K}}_{t,s}^{\mathsf{S}}=\frac{\bm{\mathscr{K}}_{t,s}+\bm{\mathscr{K}}_{s,t}^{\top}}{2}
	\nonumber
\end{align}
 The two quadratic forms describe the interaction of the system with the environment mediated by { the self-adjoint component of the Toeplitz decomposition of the decoherence operator $ \operatorname{L} $}. The third integral describes interference and is therefore associated to a symmetric quadratic form involving a single four component vector.

The representation (\ref{DGS:effective}) paves the way for unraveling the influence functional.  To get there a little more algebraic analysis is needed.
We observe that the diagonal matrix elements of (\ref{DGS:covariance1}) transform under complex conjugation as
\begin{align}
	\bar{\mathscr{K}}_{ t,s}^{(1,1)}=\mathscr{K}_{ s,t}^{(2,2)}
	\nonumber
\end{align}
It is thus expedient to define
\begin{align}
	& f_{t,s}=\frac{\bar{f}_{t,s}^{(1)}+f_{t,s}^{(2)}}{2}={\bar{f}_{s,t}}
	\nonumber\\
	&
	\mathscr{S}_{ t,s}=\operatorname{Re}\left(\dfrac{\mathscr{K}_{ t,s}^{(1,1)}+\bar{\mathscr{K}}_{ t,s}^{(2,2)}}{2}\right) =\mathscr{S}_{ s,t}
	\nonumber\\
	&\mathscr{A}_{ t,s}=\imath\operatorname{Im}\left(\dfrac{\mathscr{K}_{ t,s}^{(1,1)}+\bar{\mathscr{K}}_{ t,s}^{(2,2)}}{2}   \right)
	= \mathscr{A}_{ s,t}
	\nonumber
\end{align}
and  to decompose the symmetrized kernel as
\begin{align}
\bm{\mathscr{K}}_{ t,s}^{\mathsf{S}}=\operatorname{Z}_{t,s}\sigma_{1} 
	+\operatorname{R}_{t,s}
	\nonumber
\end{align}
where $ \sigma_{1} $ is as before the first Pauli matrix and
\begin{align}
	\operatorname{Z}_{t,s}=\begin{bmatrix}
		f_{t,s} & -\mathscr{S}_{t,s}
		\\  
		-\mathscr{S}_{t,s}  & \bar{f}_{t,s}
	\end{bmatrix}=\operatorname{Z}_{s,t}^{\dagger}
		\label{DGS:popcov}
\end{align}
and
\begin{align}
\operatorname{R}_{t,s}=	\begin{bmatrix}
		-\mathscr{A}_{t,s}	 &  0 \\  0 &	\mathscr{A}_{t,s}
	\end{bmatrix}=\operatorname{R}_{s,t}^{\top}=\sigma_{1}\bar{\operatorname{R}}_{t,s}\sigma_{1}
	\label{DGS:compcov}
\end{align}
Thus, (\ref{DGS:popcov}) and (\ref{DGS:compcov})  respectively transform as the proper and complementary covariance of a complex Gaussian problem. The last condition that we need to verify is that the augmented covariance constructed using (\ref{DGS:popcov}) and (\ref{DGS:compcov}) 
as blocks (see  formula (\ref{cGp:block}) of Appendix~\ref{ap:cGp}) is semi-positive definite. This is generically the case based on (\ref{DGS:covfun}). We can therefore write  for $ i=1,2 $
\begin{align}
&	\int_{0}^{\tf}\mathrm{d}t\,\int_{0}^{\tf}\mathrm{d}s\,\bm{\mathscr{r}}_{t}^{(i)\top} \bm{\mathscr{K}}_{t,s}^{\mathsf{S}}\bm{\mathscr{r}}_{ s}^{(i)}=
	\nonumber\\
&
\int_{0}^{\tf}\mathrm{d}t\,\int_{0}^{\tf}\mathrm{d}s\,\left (\bm{\mathscr{r}}_{t}^{(i)\top} \operatorname{Z}_{t,s}\sigma_{1}{\bm{\mathscr{r}}_{s}^{(i)}}
+\bm{\mathscr{r}}_{t}^{(i)\top}\frac{\operatorname{R}_{t,s}+\sigma_{1}\bar{\operatorname{R}}_{t,s}\sigma_{1}}{2}{\bm{\mathscr{r}}_{s}^{(i)}}\right )
	\label{DGS:cGpdec}
\end{align}
We recognize (see Appendix~\ref{ap:cGp}) that this is the logarithm of the generating function of a complex Gaussian process. In the absence of interference, i.e. when the interaction with the environment is mediated by a single self-adjoint decoherence operator this is the only contribution
\cite{StGr2002}.
Turning to the interference term, the kernel of the quadratic form has the bock structure
\begin{align}
	\bm{\mathscr{I}}_{t,s}={\frac{1}{4\,\imath}\begin{bmatrix}
	0 & \operatorname{I}_{t,s}	\\   {\operatorname{I}_{s,t}^{\top}} & 0
	\end{bmatrix}},
\hspace{1.0cm}
\operatorname{I}_{t,s}={\begin{bmatrix}
	\mathscr{K}_{ t,s}^{(1,1)}-\bar{\mathscr{K}}_{ t,s}^{(2,2)}	& -\mathscr{K}_{ t,s}^{(1,2)}+\bar{\mathscr{K}}_{ t,s}^{(2,1)}
	\\
\bar{\mathscr{K}}_{ t,s}^{(1,2)}  -\mathscr{K}_{ t,s}^{(2,1)}&  -\bar{\mathscr{K}}_{ t,s}^{(1,1)}+\mathscr{K}_{ t,s}^{(2,2)}
\end{bmatrix}}
	\nonumber
\end{align}
 and therefore enjoys the properties of a complementary covariance of an improper complex Gaussian process: 
\begin{align}
	\bm{\mathscr{I}}_{t,s}=\bm{\mathscr{I}}_{s,t}^{\top}
	\nonumber
\end{align}
It is therefore legitimate the interpretation of the interference term as the logarithm of half the generating function of a {complex} Gaussian process.
The upshot is that we can couched the influence functional  into the form
\begin{align}
	\exp\left(\int_{0}^{\tf}\mathrm{d}t\,\int_{0}^{\tf}\mathrm{d}s\,\bm{\mathscr{j}}_{t}^{\top}
	\bm{\mathscr{K}}_{ t,s}\bm{\mathscr{j}}_{ s}\right)=\prod_{i=1}^{2}{\mathcal{E}}_{\mathcal{G}_{i}}(\bm{\mathscr{r}}^{(i)}, \sigma_{1}\bm{\mathscr{r}}^{(i)})\mathcal{E}_{\mathcal{G}_{3}}(\bm{\mathscr{s}},0)
	\nonumber
\end{align}
with $ {	\mathcal{E}}_{\mathcal{G}_{i}} $, $ i=1,2 $ the generating functions measures of the $ \mathbb{C}^{2} $-valued independent Gaussian processes $ \bm{\zeta}=\big{\{}\bm{\zeta}_{t}^{(i)}\big{\}}_{t\,\geq\,0} $,   $ i=1,2 $
\begin{align}
	&	{	\mathcal{E}}_{\mathcal{G}_{i}}(\bm{\mathscr{r}}^{(i)}, \sigma_{1}\bm{\mathscr{r}}^{(i)}):=\operatorname{E}_{\mathcal{G}_{i}}\exp\left(\int_{0}^{t}\mathrm{d}s \,\Big{(}\bm{\mathscr{r}}_{s}^{(i)\top}\bm{\zeta}_{s}^{(i)}+\bm{\mathscr{r}}_{s}^{\top(i)}\sigma_{1}\bm{\bar{\zeta}}_{s}^{(i)}\Big{)}\right)
	\nonumber
\end{align}
and of the the $ \mathbb{C}^{4} $-valued Gaussian process $ \bm{\eta}=\big{\{}\bm{\eta}_{t}\big{\}}_{t\,\geq\,0} $ also independent from the other ones:
\begin{align}
{	\mathcal{E}}_{\mathcal{G}_{3}}(\bm{\mathscr{\sigma}},0):=	{\operatorname{E}_{\mathcal{G}_{3}}}\exp\left(\int_{0}^{t}\mathrm{d}s \,\bm{\mathscr{\sigma}}_{s}^{\top}\bm{\eta}_{s}\right)
	\nonumber
\end{align}

\subsection{Ostensible state vector equations}

The construction of the complex Gaussian process comes at a price. The state operator is obtained from the expectation value
\begin{align}
	\bm{\rho}_{t}=\operatorname{E}_{\mathcal{G}_{1},\mathcal{G}_{2},\mathcal{G}_{3}}\bm{\varphi}_{t}\bm{\phi}_{t}^{\top}
	\label{DGS:unraveling}
\end{align}
of the outer product of {solutions of random differential equations not related by the adjoint operation} \cite{TilA2017}
\begin{align}
	&		\bm{\dot{\varphi}}_{t}=-\imath\,\operatorname{H}\bm{\varphi}_{t}+\frac{\operatorname{L}+\operatorname{L}^{\dagger}}{\sqrt{2}}(\gamma_{1; t}+\eta_{1; t})\bm{\varphi}_{t}
	+\frac{\operatorname{L}-\operatorname{L}^{\dagger}}{\sqrt{2}\,\imath}(\gamma_{2; t}+\eta_{3; t})\bm{\varphi}_{t}
	\nonumber\\
	&	\bm{\dot{\phi}}_{t}=\imath\,\operatorname{H}\bm{\phi}_{t}+\frac{\operatorname{L}+\operatorname{L}^{\dagger}}{\sqrt{2}}(\bar{\gamma}_{1; t}+\eta_{2; t})\bm{\phi}_{t}+\frac{\operatorname{L}-\operatorname{L}^{\dagger}}{\sqrt{2}\,\imath}(\bar{\gamma}_{2; t}+\eta_{4; t})\bm{\phi}_{t}
	\nonumber
\end{align}
with $ \bm{\gamma}=\big{\{}\bm{\gamma}_{t}\big{\}}_{t\,\geq\,0} $ {the} complex-valued Gaussian process in $ \mathbb{C}^{2} $ whose component are specified by those of the Gaussian processes introduced in the unraveling:
\begin{align}
	&	\gamma_{1; t}=\zeta_{1; t}^{(1)}+\bar{\zeta}_{2; t}^{(1)}
	\nonumber\\
	&\gamma_{2; t}=\zeta_{1; t}^{(2)}+\bar{\zeta}_{2; t}^{(2)}
	\nonumber
\end{align}
We emphasize that the proper covariance of the Gaussian process $ \bm{\eta} $ does not play any role for the evaluation of the state operator.  
We also notice that if in non-generic cases the augmented covariance (\ref{DGS:cGpdec}) is not immediately semi-positive definite, it can be always regularized by adding to the unraveling of $ \bm{\varphi} $ and $ \bm{\phi} $ a extra proper complex Gaussian process with imaginary prefactor. As the two processes are not related by dual adjoint, the result is that dynamical correlations are independent of the regularization.

{The result (\ref{DGS:unraveling}) that the state operator is the expectation value of two vectors one not the adjoint dual of the other is not too surprising.  At arbitrary coupling the canonical master equation is of the form (\ref{me:me}) with couplings of arbitrary sign. It therefore generates a completely bounded semi-group of linear maps. It is always possible to couch a completely bounded linear map as the off-diagonal block of a completely positive map on operators 
in an embedding Hilbert space. Based on this observation, \cite{BrKaPe1999,BrPe2002,BreH2004} showed that solutions of (\ref{me:me}) admit  a representation of the form (\ref{DGS:unraveling}) with two vectors such that their direct sum is solution of a stochastic Schr\"odinger equation.}

Finally, a straightforward application of Gaussian integration by parts \cite{NuaD1995}, permits to write the equation for the state operator in terms of functional derivatives of the ostensible vectors, hence giving a precise mathematical meaning to the equations originally found in \cite{DiSt1997} see also \cite{DiFe2014}.

\section{Comparison of the unravelings}
\label{sec:comparison}

The foregoing analysis of the Gaussian random differential equation highlights merits and differences with the unraveling via the influence martingale 
\cite{DoMG2022,DoMG2023}. 

The most evident difference is that the influence martingale is an all-encompassing existence result of the unraveling of open quantum systems. Existence is granted under the same generic conditions which guarantee that of the canonical master equation.  The unraveling is naturally formulated in terms of a stochastic state vector evolving on the system Bloch hyper-sphere according to It\^o ordinary stochastic differential equations. The evolution is therefore inherently non-anticipating of collapse events. The fact makes available a mature set of numerical algorithms (see e.g. \cite{PlBrLi2010}) and concentration estimates \cite{TorG2020} for efficient generation of the statistics. As in the case of the unraveling of the completely positive master equation, the expected computing time scales as $ O(2d)^2 $ times the number of realizations.

The non anticipating nature of the generated statistics also paves the way to a measurement interpretation. This is possible by associating an instrument \cite{BaLu2005} to the unraveling  e.g. 
directly if the evolution preserves positivity or by embedding in a $ \mathbb{C}^{2}\,\otimes\,\mathcal{H} $ completely positive map as detailed in \cite{DoMG2023} drawing from \cite{BreH2004}. The main limitation of the unraveling is the same that applies to the canonical master equation. The existence result is non constructive. Actual applications of the influence martingale require the knowledge of the canonical  
couplings. In general,  canonical couplings can  be determined via time convolution-less perturbation theory \cite{NeBrWe2021} (see e.g. \cite{SmVa2010} for an application). 

The Gaussian random state vector has the advantage that it is always possible to exactly compute the augmented covariance of the driving Gaussian processes. Genuine quantum effects as interference are handled by introducing a Gaussian process which affects the dynamics only via its complementary covariance. As a consequence, the unraveling is in general not described by an (ostensible) state vector and its adjoint dual but requires to generate the realizations dynamics of the two vector-valued random processes $ \bm{\varphi} $ and $ \bm{\phi} $.  The fact that the unraveling is described by random differential equations driven by time-correlated processes directly reflects the true nature of hidden Markov process (see e.g. \cite{EpMe2002}) of the open system. The time scales involved in the system-environment interaction are not really resolved as it is done in time-convolutionless perturbation theory but rather modeled by the time correlations of significantly lower dimensional Gaussian processes driving the effective random dynamics. The physical interpretation of these correlations is also challenging. They appear to introduce anticipating effects that generically prevent a measurement interpretation \cite{WiGa2008} see however, \cite{MeStLu2020}. 
Finally, the estimation of computing time must take into account the generation of the statistics of time-correlated complex-valued Gaussian processes. This task can be accomplished by means of a set of auxiliary It\^o stochastic differential equations driven by Wiener processes.
Furthermore, {the} computing time depends upon the number of decoherence operators as each of them requires two complex-valued time correlated Gaussian variables.  Algorithms and concentration estimates are not immediately available as for It\^o stochastic differential equations but are actively developed \cite{PrChHuPl2010, StKiKiKe2018}.
\vspace{0.5cm}
\begin{table}[h!]
	\begin{center}
		\label{tab:table1}
\begin{tabular}{|l||c| c||}
	\hline
	 & {\Fontauri Random differential equation} & {\Fontauri   Influence martingale} \\
	\hline\hline
	{\Fontauri Environment}                             & Gaussian                               &  Any  \\
	{\Fontauri Evaluation of couplings}          & Exact from path integral  & Via TCL-PT\\
    {\Fontauri Stochastic process type}         & Hidden Markov                    & Markov \\
    {\Fontauri Numerical methods}                 & Handling of time correlation   & ordinary It\^o SDE \\
   {\Fontauri Measurement interpretation} & No  & Yes \\
	\hline
\end{tabular}
	\caption{Summary of the comparison between the unravelings of section~\ref{sec:comparison}\\
		TCL-PT:  time-convolutionless perturbation theory. \\SDE: stochastic differential equation. }
\end{center}
\end{table}

\section{Conclusions and outlook}

In this paper we reviewed the influence martingale unraveling \cite{DoMG2022,DoMG2023} 
which to the best of our understanding provides the most general way to unravel an open quantum system.
We also derived in a model problem the state-of-the art form \cite{StGr2002,StoJ2004,TilA2017} of the random differential equation \cite{StrW1996,DiSt1997, DiGiSt1998, DiFe2014} which unravels the state operator of a system evolving in a Gaussian environment.
We believe that contrasting the derivations of the unraveling is the most direct avenue to delve into their significance and use.

Unravelings have multifarious applications \cite{WisH1996}.  In our view, two of the most promising are  parameter estimation from measurement records \cite{GaWi2001}, especially in the context of solid state implementation of quantum integrated circuits \cite{WeMuKiScRoSi2016,DoMGPe2019,DoGoMG2020}, and quantum error mitigation \cite{DoLeAnMG2023}. 

Near-term applications of quantum computing with noisy circuits require methods to mitigate or 
cancel errors induced by decoherence or faulty gate operations leading to the loss of quantum resources. 
Ideally, a quantum operation is a pure unitary. We may conceptualize an error model in a given environment as a completely positive map which steers the system state vector away from desired target states. This way of reasoning leads to confront the problem of inverting a non-unitary completely positive map. This is a task that cannot be directly accomplished by means of another completely positive operation.
The trailblazing paper \cite{TeBrGa2016} proposed, among other contributions, that error cancellation can be achieved by subsequent random application of the completely positive maps entering the Wittstock-Paulsen decomposition of the completely bounded map which yields the inverse of the faulty evolution. In this way error mitigation is amenable to classical post-processing without the use of additional quantum resources. The application of the method requires a characterization of the sequence of completely positive operation to be applied and corresponding concentration estimates \cite{JiWaWa2021}. In \cite{DoLeAnMG2023} we implement the idea, also outlined in \cite{DoMG2023}, of accomplishing the same task at the level of unraveling. Clearly, unravelings rely on the knowledge of the possible decoherence channels affecting the system. Thus fidelity robustness with respect to finite accuracy in the knowledge of the decoherence channels has to be taken into account. Our preliminary results encourage us to think that unraveling theory in general, and the influence martingale in particular provide ductile novel tools for quantum error mitigation. Finally, we would like to thank Mi\l osz Michalski for his assistance and patience in editing the final version of the paper.

\section{Acknowledgments}

It is a pleasure to thank Dariusz Chru{\'{s}}ci{\'{n}}ski, Manuel Gessner and Bassano Vacchini with whom PMG had many discussions during the preparation of the manuscript, and Jukka Pekola and the entire PICO group at Aalto University for many discussions and constant encouragement.
We also warmly thank Alberto Barchielli for useful comments on similarities and differences between the influence martingale and his earlier works \cite{BaBe1991,BaPe2010,BaPePe2012} on martingale methods for the unraveling of the completely positive master equation. The authors are also grateful to Erik Aurell and Ingemar Bengtsson for the invitation to contribute to the Lindblad Memorial volume in OSID. 

\appendix

\section*{Appendix}

\setcounter{equation}{0}
\renewcommand{\theequation}{\thesection\arabic{equation}}

\section{Definition of complex-valued Gaussian processes}
\label{ap:cGp}

Let $\bm{\theta}= \big{\{}\bm{\theta}_{t}\big{\}}_{t\,\geq\,0} $ and $ \bm{\vartheta}= \big{\{}\bm{\vartheta}_{t}\big{\}}_{t\,\geq\,0}  $ two $ d $-dimensional zero mean, real-valued correlated Gaussian processes with covariance
\begin{align}
	\mathcal{C}_{t,s}=\begin{bmatrix}
		\operatorname{E}\bm{\theta}_{t}\bm{\theta}_{s}^{\top} &	\operatorname{E}\bm{\theta}_{t}\bm{\vartheta}_{s}^{\top} 
		\\ 
		\operatorname{E}\bm{\vartheta}_{t}\bm{\theta}_{s}^{\top} &	\operatorname{E}\bm{\vartheta}_{t}\bm{\vartheta}_{s}^{\top} 
	\end{bmatrix}
	\nonumber
\end{align}
We may unambiguously define a zero mean complex-valued Gaussian process via the unitary mapping $ \operatorname{T} $
\begin{align}
	\begin{bmatrix}
		\bm{\zeta}_{t}	\\ \bm{\bar{\zeta}}_{t}  
	\end{bmatrix}
	= \operatorname{T}
	\begin{bmatrix}
		\bm{\theta}_{t}	\\  \bm{\vartheta}_{t}
	\end{bmatrix},
	\hspace{1.0cm}
	\operatorname{T}=\frac{1}{\sqrt{2}}\begin{bmatrix}
		\operatorname{1}_{d}	& \imath\,  \operatorname{1}_{d}
		\\
		\operatorname{1}_{d}	& -\,\imath\,  \operatorname{1}_{d}
	\end{bmatrix}
	\nonumber
\end{align}
The transformation yields the covariance of the complex-valued process
\begin{align}
	\mathcal{Z}_{t,s}=
	\begin{bmatrix}
		\operatorname{E}\bm{\zeta}_{t}\bm{\zeta}_{s}^{\dagger} &	\bm{\zeta}_{t}\bm{\zeta}_{s}^{\top}
		\\
		\operatorname{E}\bm{\bar{\zeta}}_{t}\bm{\zeta}_{s}^{\dagger} &	\bm{\bar{\zeta}}_{t}\bm{\zeta}_{s}^{\top}
	\end{bmatrix}
	:=\operatorname{T}\mathcal{C}_{t,s}\operatorname{T}^{\dagger}=\begin{bmatrix}
		\operatorname{Z}_{t,s} & \operatorname{R}_{t,s}
		\\  
		\bar{\operatorname{R}}_{t,s} & \bar{\operatorname{Z}}_{t,s}
	\end{bmatrix}
	\label{cGp:block}
\end{align}
and
\begin{subequations}
	\label{cGp:def}
	\begin{align}
		&\label{cGp:def1}
		\operatorname{Z}_{t,s}=\frac{1}{2}\operatorname{E}\Big{(}\bm{\theta}_{t}\bm{\theta}_{s}^{\top}+	\bm{\vartheta}_{t}\bm{\vartheta}_{s}^{\top}+\imath\,(
		\bm{\theta}_{t}\bm{\vartheta}_{s}^{\top} -\bm{\vartheta}_{t}\bm{\theta}_{s}^{\top})\Big{)}
		\\
		&\label{cGp:def2}
		\operatorname{R}_{t,s}=	\frac{1}{2}\operatorname{E}\Big{(}\bm{\theta}_{t}\bm{\theta}_{s}^{\top}-	\bm{\vartheta}_{t}\bm{\vartheta}_{s}^{\top}+\imath\,(
		\bm{\theta}_{t}\bm{\vartheta}_{s}^{\top}+\bm{\vartheta}_{t}\bm{\theta}_{s}^{\top})\Big{)}
	\end{align}
\end{subequations}
In the information theory literature  \cite{NeMa1993,ScSc2003} {a matrix specified by (\ref{cGp:block}), (\ref{cGp:def})} is called an \textquotedblleft augmented\textquotedblright\ covariance. It is also customary to distinguish between the two following cases.
\begin{itemize}
	\item A Gaussian process is \textquotedblleft proper\textquotedblright\ if
	\begin{align}
		\operatorname{E}(\bm{\theta}_{t}\bm{\theta}_{s}^{\top}-	\bm{\vartheta}_{t}\bm{\vartheta}_{s}^{\top})
		=	\operatorname{E}(\bm{\theta}_{t}\bm{\vartheta}_{s}^{\top}+\bm{\vartheta}_{t}\bm{\theta}_{s}^{\top})=0
		\nonumber
	\end{align}
	In this case, the augmented covariance is fully specified by the \textquotedblleft proper\textquotedblright\ covariance matrix $ \operatorname{Z}_{t,s} $. A proper process comes about when the real Gaussian processes are independent.
	\item A Gaussian process is \textquotedblleft improper\textquotedblright\ if the \textquotedblleft complementary-covariance\textquotedblright\ (or \textquotedblleft pseudo-covariance\textquotedblright\ or \textquotedblleft relation\textquotedblright ) matrix $ \operatorname{R}_{t,s} $
	is non vanishing.
\end{itemize}
In conclusion, a complex matrix with the block structure (\ref{cGp:block}) is an augmented covariance if and only if  it enjoys the following properties 
\begin{enumerate}[label=\textbf{P.\roman*}]
	\item \label{cov:spd}  is positive semi-definite.
	\item \label{cov:diag}  diagonal blocks transform under complex conjugation as required by the definition (\ref{cGp:def1}) of the proper covariance:
	$$ \operatorname{Z}_{t,s}^{\dagger}=\operatorname{Z}_{s,t} $$
	\item \label{cov:non-diag} off-diagonal blocks as symmetric as required by the definition (\ref{cGp:def2}) of the complementary covariance: $$ \operatorname{R}_{t,s}^{\top}=\operatorname{R}_{s,t} $$
\end{enumerate}
The outlined construction  allows us to straightforwardly determine the generating function of the complex-valued process. Given two arbitrary source-fields
$ \bm{a}=\big{\{}\bm{a}_{t}\big{\}}_{t\,\geq\,0} $, $ \bm{b}=\big{\{}\bm{b}_{t}\big{\}}_{t\,\geq\,0} $ we define
\begin{align}
	&\mathcal{E}(\bm{a},\bm{b} )	=\operatorname{E}\exp\left(\int_{0}^{t}\mathrm{d}s\, \bm{a}_{s}^{\top}\bm{\zeta}_{s}+ \bm{b}_{s}^{\top}\bm{\bar{\zeta}}_{s}\right)
	:=
	\nonumber\\
	&		\operatorname{E}\exp\left(\int_{0}^{t}\mathrm{d}s\, \frac{(\bm{a}_{s}+\bm{b}_{s})^{\top}}{\sqrt{2}}\bm{\theta}_{s}+
	\frac{\imath\,(\bm{a}_{s}-\bm{b}_{s})^{\top}}{\sqrt{2}}\bm{\vartheta}_{s} \right)
	\nonumber
\end{align} 
We are then in the position to apply the expression of the generating function of real Gaussian random processes and hence to arrive at
\begin{align}
	\mathcal{E}(\bm{a},\bm{b} )=\exp\left(\frac{1}{2}
	\int_{0}^{t}\mathrm{d}s\int_{0}^{t}\mathrm{d}u\,
		\begin{bmatrix}
		\bm{a}_{s}	\\ \bm{b}_{s}  
	\end{bmatrix}^{\top}
{\operatorname{T}
	\mathcal{C}_{s,u}
	\operatorname{T}^{\top}}
	\begin{bmatrix}
		\bm{a}_{u}	\\ \bm{b}_{u}  
	\end{bmatrix}
	\right)
	\nonumber
\end{align}
which defines the matrix
\begin{align}
	\operatorname{T}\mathcal{C}_{t,s}\operatorname{T}^{\top}
	=\mathcal{Z}_{t,s}\operatorname{T}\operatorname{T}^{\top}
	=
	\begin{bmatrix}
		\operatorname{R}_{t,s} & \operatorname{Z}_{t,s}	
		\\  
		\bar{\operatorname{Z}}_{t,s} & \bar{\operatorname{R}}_{t,s}
	\end{bmatrix}
	\label{cGp:cf}
\end{align}
A common trick in quantum field theory, see e.g. \cite{ZeeA2010}, uses the characteristic function to compute expectation values with respect to the 
Gaussian measure as a formal series of functional derivatives evaluated a zero field
\begin{align}
\operatorname{E}\mathcal{F}(\bm{\zeta},\bm{\bar{\zeta}})=	\mathcal{E}\left (\frac{\delta}{\delta \bm{a}},\frac{\delta}{\delta \bm{b}}\right)\Big{|}_{\bm{a}=\bm{b}=0}\mathcal{F}(\bm{a},\bm{b})\
	\nonumber
\end{align}
Gaussian integration by parts \cite{NuaD1995} can be regarded as formal consequence of this formula
\begin{align}
&	\operatorname{E} \bm{\zeta}_{t}\mathcal{F}(\bm{\zeta},\bm{\bar{\zeta}})=	\mathcal{E}\left (\frac{\delta}{\delta \bm{a}},\frac{\delta}{\delta \bm{b}}\right)\Big{|}_{\bm{a}=\bm{b}=0}\bm{a}_{t} \mathcal{F}(\bm{a},\bm{\bar{b}})
\nonumber\\
&=\int_{0}^{t}\mathrm{d}s\operatorname{E}
	\left( \bar{\operatorname{Z}}_{t,s}\cdot\frac{\delta}{\delta \bm{b}_{s}} + \operatorname{R}_{t,s}\cdot\frac{\delta}{\delta \bm{a}_{s}}\right )\Big{|}_{\bm{a}=\bm{b}=0}\mathcal{F}(\bm{a},\bm{\bar{b}})
	\nonumber
\end{align}

\bibliography{lindblad_osid}{} 
\bibliographystyle{abbrv}

\end{document}